\documentclass[11pt,reqno,letterpaper]{amsart}
\usepackage{booktabs}
\usepackage{color}
\usepackage[colorlinks=true, allcolors=blue, hypertexnames=false]{hyperref}
\usepackage{amsmath, amssymb, amsthm}
\usepackage{mathrsfs}
\usepackage{mathtools}
\usepackage[noabbrev,capitalize,nameinlink]{cleveref}
\usepackage{fullpage}
\usepackage[noadjust]{cite}
\usepackage{graphics}
\usepackage{pifont}
\usepackage{tikz}
\usepackage{bbm}
\usepackage{thm-restate}
\usepackage[T1]{fontenc}

\usetikzlibrary{arrows.meta}

\usepackage{environ}
\usepackage{framed}
\usepackage{url}
\crefname{algocf}{Algorithm}{Algorithms}
\Crefname{algocf}{Algorithm}{Algorithms}
\usepackage[linesnumbered,ruled,vlined]{algorithm2e}
\usepackage[noend]{algpseudocode}
\usepackage[labelfont=bf]{caption}
\usepackage{cite}
\usepackage{framed}
\usepackage[framemethod=tikz]{mdframed}
\usepackage{appendix}
\usepackage{graphicx}
\usepackage[textsize=tiny]{todonotes}
\usepackage{tcolorbox}
\usepackage{enumerate}
\allowdisplaybreaks[1]
\usepackage{enumerate}
\usepackage{stmaryrd}
\usepackage[margin=1in]{geometry}
\usetikzlibrary{decorations.pathreplacing}
\allowdisplaybreaks

\newtheorem{theorem}{Theorem}[section]
\newtheorem{proposition}[theorem]{Proposition}
\newtheorem{lemma}[theorem]{Lemma}

\newtheorem{remark}[theorem]{Remark}

\theoremstyle{definition}

\numberwithin{equation}{section}

\newcommand{\cB}{\mathcal{B}}
\newcommand{\E}{\mathbb{E}}
\renewcommand{\P}{\mathbb{P}}
\newcommand{\R}{\mathbb{R}}

\newcommand{\eps}{\varepsilon}
\newcommand{\mc}{\mathcal}
\newcommand{\on}{\operatorname}

\newcommand{\paren}[1]{\left( #1 \right)}
\newcommand{\floor}[1]{\left\lfloor #1 \right\rfloor}

\newcommand{\mb}{\mathbb}
\newcommand{\mbf}{\mathbf}
\DeclareMathOperator{\LIS}{LIS}

\title{The online monotone array completion problem}

\author{Vishesh Jain}
\address{Department of Mathematics, Statistics, and Computer Science, University of Illinois Chicago, Chicago, IL 60607, USA}
\email{visheshj@uic.edu}

\author{Dylan King}
\address{Department of Mathematics, California Institute of Technology, Pasadena, CA 91106}
\email{dking@caltech.edu}

\author{Clayton Mizgerd}
\address{Department of Mathematics, Statistics, and Computer Science, University of Illinois Chicago, Chicago, IL 60607, USA}
\email{cmizge2@uic.edu}

\begin{document}

\begin{abstract}
Consider the following online filling game.  An array of length \(n\) is initially
empty.  At each time step one observes an independent sample from
\(\mathrm{Unif}[0,1]\) and must either discard it or place it irrevocably into an empty
position of the array, while preserving the constraint that the occupied entries are
non-decreasing from left to right.  Among all possible strategies, what is the
optimal expected time required to fill the array?

Let \(v_n\) denote this optimal expected completion time.  Our main result determines
\(v_n\) up to lower-order terms:
\[
v_n=\left(\frac12+o(1)\right)n\log n.
\]
More precisely, no strategy, even if randomized and adaptive, can have expected
completion time below \(\left(\frac12-o(1)\right)n\log n\), while we provide an explicit
deterministic strategy whose expected completion time is at most
\(\left(\frac12+o(1)\right)n\log n\).  For comparison, the natural coupon-collector
strategy, which partitions \([0,1]\) into \(n\) equal intervals and reserves one array
position for each interval, has expected completion time \((1+o(1))n\log n\).

We also consider a with-replacement version of the game, in which previously placed
entries may be overwritten.
For this variant, we give
a deterministic strategy with expected completion time \(O(n\sqrt{\log n})\), thereby
establishing a separation between the two models.
\end{abstract}

\maketitle

\section{Introduction}

We study an online array-filling game with a monotonicity constraint.  An array
of length \(n\) is initially empty.  At each step, a new independent sample from
\(\mathrm{Unif}[0,1]\) is revealed.  The player must either discard the sample forever
or place it into one of the empty positions of the array, subject to the constraint
that the occupied entries remain non-decreasing from left to right.  The goal is to
fill the array as quickly as possible.

Formally, a \emph{partial array state} is a vector
\[
a=(a_1,\dots,a_n)\in \paren{[0,1]\cup\{\ast\}}^n
\]
such that whenever \(1\le i<j\le n\) and \(a_i,a_j\in[0,1]\), one has
\(a_i\le a_j\).  The symbol \(\ast\) denotes an empty coordinate.  Starting from the
all-empty state, at each integer time \(t\geq 1\) the player observes a fresh sample
\(X_t\sim \mathrm{Unif}[0,1]\), independent of the past.  The player then either
discards \(X_t\), or places it into an empty coordinate in a way that preserves the
partial-array condition.  The game ends once all \(n\) coordinates have been filled.

For a strategy \(S\), let \(\tau_S\) denote its completion time, and set
\[
T_S(n):=\E \tau_S.
\]
We write
\[
v_n:=\inf_S T_S(n),
\]
where the infimum ranges over all possibly randomized adaptive strategies in this
game.

Let us first record two elementary bounds on \(v_n\).  The lower bound
\[
v_n\ge n
\]
is immediate, since each sample can fill at most one previously empty coordinate.
For an upper bound, consider the following coupon-collector strategy.  Partition
\([0,1]\) into the intervals
\[
A_i=\left[\frac{i-1}{n},\frac{i}{n}\right),\qquad 1\le i\le n,
\]
with the usual harmless convention at the right endpoint.  The strategy places a
sample in position \(i\) only if it lies in \(A_i\) and position \(i\) is still empty;
all other samples are discarded.  The array is filled exactly when every interval
\(A_i\) has been hit at least once, so (by the well-known coupon-collector calculation) this strategy has expected completion time
\[
nH_n=(1+o(1))n\log n,
\]
where~$H_n = \sum_{i=1}^ni^{-1}$ is the~$n$-th harmonic number.
Thus
\[
n\le v_n\le nH_n=(1+o(1))n\log n.
\]

Since a general strategy may be adaptive and may use auxiliary randomness independent
of the sample sequence, it is not \emph{a priori} clear whether the logarithmic factor
in the coupon-collector strategy is necessary.  Our main result shows that it is:
every strategy requires order \(n\log n\) samples.  On the other hand, the
coupon-collector strategy is not optimal in its leading constant.  There are
strategies which complete the array twice as fast, to first order, and this improvement
is best possible.

\begin{theorem}\label{thm:main}
With notation as above,
\[
v_n=\left(\frac12+o(1)\right)n\log n.
\]
More precisely, for every sequence of possibly randomized adaptive strategies
\(S^{(n)}\),
\[
T_{S^{(n)}}(n)\ge \left(\frac12-o(1)\right)n\log n,
\]
while there is an explicit deterministic sequence of strategies \(\widehat S^{(n)}\)
satisfying
\[
T_{\widehat S^{(n)}}(n)\le \left(\frac12+o(1)\right)n\log n.
\]
\end{theorem}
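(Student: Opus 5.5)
The proof splits into an upper bound, obtained from an explicit block strategy, and a lower bound, obtained from a potential and coupon-collector argument. Both rest on the same heuristic. Fix \(k=k(n)\to\infty\) slowly, say \(k\approx\log\log n\). Cut the array into \(n/k\) consecutive blocks of \(k\) positions and \([0,1]\) into \(n/k\) equal intervals \(J_b\) of length \(k/n\), with block \(b\) reserved for samples in \(J_b\). Each block then plays an independent copy of the game on its interval, at rate \(k/n\) per step. The completion time is the maximum over the \(n/k\) blocks, and it is governed by the exponential tail of the \emph{last} fill in each block. If the last empty slot of a block typically has an allowed interval of length about \(2/n\), the maximum of \(n/k\) such tails is about \((n/2)\log(n/k)=(\tfrac12+o(1))n\log n\). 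The factor \(2\) relative to the coupon collector comes from this: a lone empty slot between two filled neighbours sees the whole gap between them, not a fixed cell of length \(1/n\).

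\emph{Upper bound.} Inside block \(b\) I use a greedy balanced rule. A sample in \(J_b\) is placed into the empty run whose allowed interval contains it, at the position that splits that interval as evenly as possible relative to the number of empty slots on each side. I will show that with high probability the block reaches the state ``one empty slot left, with allowed interval of length \(\ge (2-\delta)/n\)'' after \(O_\delta(k\cdot n/k)\cdot\mathrm{poly}(k)=o(n\log n)\) steps. Here \(\delta=\delta(k)\to0\), and I will make the rule hold back from filling the last two slots until their shared gap is roughly \(2k/n\cdot\frac{1}{k}\)-balanced. After that point the remaining slot is filled at each step independently with probability \(\ge(2-\delta)/n\). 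A union bound over the \(n/k\) blocks, plus a separate \(O(n)\)-scale bound for blocks that deviate (whose probability is small enough to pay \(O(n\log n)\) each), gives \(\E\tau\le(\tfrac12+o(1))n\log n\). The delicate bookkeeping is to guarantee a final gap of about \(2/n\) rather than about \(1/n\) for all but a negligible fraction of blocks.

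\emph{Lower bound.} Fix a state and call a position \emph{isolated} if it is empty and both of its neighbours are filled (or it is at the boundary). The allowed intervals of distinct isolated slots are essentially disjoint, and their total measure \(\sum_j g_j\) is at most \(1\). Consider the last \(m=n/\log^2 n\) slots to be filled. I will argue, via an exchangeable and symmetrisation argument over which slots are still empty, that at the time \(t_0\) when only \(m\) remain, most of them are isolated with gaps summing to at most \(1\). By convexity, at most \(o(m)\) of them can have gap exceeding \((2+\epsilon)/n\) unless the number of isolated slots is \(\le n/2\). In the latter case the adversarial pairing structure must be analysed. From then on the slots are filled independently at rates \(g_j\), so the remaining time is at least the coupon-collector time \(\E\max_j \mathrm{Exp}(g_j)\ge \sum_j$-type bound\(\ge (1-o(1))\frac{n}{2}\log m\). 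Since \(\log m=(1-o(1))\log n\), this gives the claim, and randomization does not help because the bound holds conditionally on every state.

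The main obstacle is the lower bound. Slots need not be isolated: a run of \(r\) empty slots shares one interval, and an adaptive strategy can exploit that. I would handle this with a potential \(\Phi=\sum_{\text{runs}}\phi(r,\text{gap})\), where \(\phi(r,g)\) is a lower bound on the expected time to finish the run in isolation against a horizon. I would then prove a supermartingale inequality showing that the best achievable per-slot final gap is \(2/n\) (two slots sharing a gap of \(2/n\) is the extremal case), and plug this into the coupon-collector maximum.
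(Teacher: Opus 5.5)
Your heuristic for the constant is right, but there are genuine gaps in both halves, and the more serious one is in the lower bound. At the time $t_0$ when $m=n/\log^2 n$ slots remain, the constraint $\sum_j g_j\le 1$ only bounds the average gap by about $\log^2 n/n$, so the convexity step gives nothing. Since at most $m<n/2$ slots are empty, the case you defer (``the adversarial pairing structure must be analysed'') is in fact the only case that ever occurs. More fundamentally, no statement about the snapshot at $t_0$ can suffice. A strategy can leave the last $m$ slots with gaps far above $2/n$ by packing the other $n-m$ values more tightly, paying for this before $t_0$. So you must charge the time already spent against the gap structure, and there is no exchangeability to exploit against an adaptive strategy. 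Your fallback $\sum\phi(r,\mathrm{gap})$ is the right instinct, but it is unspecified, and ``expected time to finish a run in isolation'' does not combine under a maximum over runs.

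The missing idea is a potential whose \emph{relative} decrease rate is controlled uniformly. Take $q_a(B)=(c(B)+1-aL(B))_+$ and $Q_t=\sum_B q_a(B)$. An interior move cannot decrease $Q_t$, since $(y_1)_+ + (y_2)_+\ge(y_1+y_2)_+$. An edge move lowers it by at most $1$, and only if the sample lies within $1/a$ of an endpoint of $I(B)$. A singleton completion lowers it by at most $2$, and only if $L(B)<2/a$. Hence $\E[(Q_t-Q_{t+1})_+\mid\mathcal F_t]\le 2Q_t/a$ in every state and for every strategy. Passing to $\log(1+Q_t/2)$ turns this into an additive drift bound of $2/a$ per step. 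This gives $\E\tau\ge\frac a2\log\bigl(1+\frac{n+1-a}{2}\bigr)$, and the choice $a=n+1-n/\log n$ yields $(\frac12-o(1))n\log n$.

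The upper bound has two concrete problems. First, the greedy balanced rule as stated accepts every sample. Even when a run of two slots has the intended interval $\ell\approx 3/n$, it leaves a singleton whose gap is uniform on $[\ell/2,\ell]\approx[\frac{3}{2n},\frac3n]$. With $n^{1-o(1)}$ such singletons, the largest waiting time is about $\frac23 n\log n$. Whatever ``hold back'' rule you adopt, every step of every unfinished block must keep acceptance probability at least $(2-o(1))/n$. A single step of rate $c/n$ with $c<2$ in $n^{1-o(1)}$ blocks already spoils the constant. Second, the phase split does not close. In $n\,\mathrm{poly}(k)$ steps with $k\approx\log\log n$, a block sees no sample at all in $J_b$ with probability about $e^{-k\,\mathrm{poly}(k)}=n^{-o(1)}$. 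So $n^{1-o(1)}$ blocks deviate, and paying $O(n\log n)$ for each is far too costly.

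The paper avoids both problems by filling each block only from the two ends of its single empty run. When $k$ slots remain in a block of size $s$ with feasible interval of length $\ell$, it accepts $x$ only within $\ell/(k+1)$ of an endpoint. By induction $\ell_k\ge\frac{k+1}{s+1}\cdot\frac sn$, so at \emph{every} step each unfinished block accepts a new sample with probability at least $\lambda=\frac{2b}{(b+1)n}$. Each block's completion time is then dominated by a sum of at most $b+1$ geometric$(\lambda)$ variables. Combining $\max_j\tau_j\le\theta^{-1}\log\sum_j e^{\theta\tau_j}$ with $\E e^{\theta\tau_j}\le\eta^{-(b+1)}$ gives $(\frac12+o(1))n\log n$ directly, with no phase decomposition.
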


\begin{remark}
The proofs give corresponding polynomial lower and upper tail bounds; see
\cref{prop:hp-lower,prop:upper-explicit}.
\end{remark}

\subsection{The with-replacement variant} 
We also consider a natural with-replacement version of the game.  The array again
begins empty and must remain monotone at all times, but the player may now overwrite
previously placed entries.  More precisely, after seeing \(X_t\), the player may either
discard it or place it in any coordinate \(j\), whether or not that coordinate is already
occupied, provided that the resulting partial array state is monotone.

Let \(v_n^{\mathrm{rep}}\) denote the analogue of \(v_n\) for this with-replacement
game.  Since each coordinate must eventually be occupied, we still have the trivial
lower bound
\[
v_n^{\mathrm{rep}}\ge n.
\]
One might expect the with-replacement version to have the same order of completion
time as the original game.  Our next result shows, perhaps surprisingly, that this is not the case.

\begin{theorem}\label{thm:replacement-upper}
For the with-replacement game,
\[
v_n^{\mathrm{rep}}=O(n\sqrt{\log n}).
\]
More precisely, for each \(n\), there is a deterministic with-replacement strategy
\(S_{\mathrm{rep}}^{(n)}\) such that
\[
\E[\tau_{S_{\mathrm{rep}}^{(n)}}]=O(n\sqrt{\log n}).
\]
Moreover, for every fixed \(D>0\), there is a constant \(C_D\) such that
\[
\P\left(\tau_{S_{\mathrm{rep}}^{(n)}}>C_D n\sqrt{\log n}\right)\le n^{-D}
\]
for all sufficiently large \(n\).
\end{theorem}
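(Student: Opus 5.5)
The plan is a \emph{widened-window} strategy: each slot accepts a range of values of width about \(\sqrt{\log n}/n\), and overwriting is used to repair the monotonicity conflicts that such overlapping ranges create. Set \(L:=\lceil\sqrt{\log n}\,\rceil\). Slot \(j\) will accept a sample \(x\) only if \(x\in W_j:=\bigl[\tfrac{j-1}{n}-\tfrac{L}{n},\,\tfrac{j}{n}+\tfrac{L}{n}\bigr]\). The arithmetic behind \(\sqrt{\log n}\) is the following. Over a run of \(C n L\) samples, each window \(W_j\) receives a Poisson-like number of hits with mean about \(2CL^2\asymp C\log n\). So the probability that a fixed slot sees few hits in its window is \(n^{-\Omega(C)}\), and a union bound over the \(n\) slots gives the tail bound \(n^{-D}\) once \(C=C_D\) is large. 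The expectation bound then follows from the tail bound combined with the trivial coupon-collector fallback, restarting after a failed phase, since \(nH_n\cdot n^{-D}=o(n)\).

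\textbf{Strategy.} When \(x\) arrives, let \(J(x)=\{j: x\in W_j\}\), which is an interval of about \(2L\) indices. If some empty \(j\in J(x)\) can receive \(x\) while keeping the array monotone, place \(x\) in the leftmost such \(j\). Otherwise, if some empty \(j\in J(x)\) is blocked only because a filled neighbour \(a_{j'}\) (with \(j'\) also in \(J(x)\)) is out of order relative to \(x\), we use replacement. Concretely, we write \(x\) into \(j\) and shift the conflicting values one step outward by overwriting: a filled entry in the run whose value lies outside its own window, or is ``misplaced'', is overwritten by \(x\). The invariant to maintain is that every filled entry \(a_i\) lies in \(W_i\), and that filled entries are ordered. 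Under this invariant a slot \(j\) with both neighbours filled always has a feasible interval \([a_{j-1},a_{j+1}]\supseteq\) (roughly) \([\tfrac{j-2}{n}-\tfrac Ln,\tfrac{j+1}{n}+\tfrac Ln]\cap\ldots\), which is nonempty. The point of replacement is precisely to keep this feasible interval of length \(\gtrsim 1/n\) rather than letting earlier greedy placements pinch it to zero, which is what kills the no-replacement game.

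\textbf{Analysis, in order.} (i) Prove the invariant is preserved by the update rule, and that the number of filled slots never decreases. Overwriting replaces a value but never empties a slot. (ii) Show that for each empty slot \(j\), its feasible interval at any time contains a subinterval of \(W_j\) of length at least \(c/n\). Moreover, after the first \(O(nL)\) samples, once neighbours within distance \(L\) are mostly filled, the feasible interval in fact contains a fixed deterministic interval of length \(\ge cL/n\) around \(j/n\). (iii) Conclude that in each subsequent sample, slot \(j\) (if still empty) gets filled with probability at least \(cL/n\), conditionally on the past. Hence \(\P(j\text{ empty after }CnL\text{ more samples})\le (1-cL/n)^{CnL}\le e^{-cCL^2}=n^{-cC}\). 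Then take a union bound over \(j\).

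\textbf{Main obstacle.} The hard part is step (ii): showing that overwrites do not cascade and do not shrink feasible intervals to width \(o(L/n)\). The danger is that a long run of filled entries crowded near one end of their windows leaves an isolated empty slot with only an \(O(1/n)\)-length gap. To get around this, I would first try to exploit the replacement rule itself. Any such crowding is a value \(a_i\) sitting far from \(i/n\), and a sample landing near \(i/n\) overwrites it, pushing it back toward its target. So crowding near slot \(j\) decays at rate about \(L/n\) per sample. I would control this with a potential function, namely \(\sum_i |a_i - i/n|\) restricted to an \(L\)-neighbourhood of \(j\), and a drift argument over a block of \(O(nL)\) steps. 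If the drift analysis proves too delicate, the fallback is a cruder block version. Split the array into blocks of \(L\) slots with overlapping value ranges of width \(L/n\), fill blocks independently, and use replacement only at block boundaries; the same \(L^2=\log n\) computation then governs each block.
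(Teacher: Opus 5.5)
\textbf{There is a genuine gap: step (ii) carries the whole argument, and what you give there is a heuristic whose stated justification is incorrect.}

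First, the rule is not a legal strategy as written. Each sample can be written into only one coordinate, so ``write \(x\) into \(j\) and shift the conflicting values one step outward'' is not an admissible move. You also never specify which ``misplaced'' entry gets overwritten.

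Second, the invariant ``\(a_i\in W_i\) and filled entries ordered'' gives no lower bound on the admissible interval of an empty slot. It only gives the upper bound \([a_{j-1},a_{j+1}]\subseteq[\frac{j-2-L}{n},\frac{j+1+L}{n}]\), so the inclusion you write goes the wrong way. Indeed \(j/n\in W_{j+k}\) for all \(-L\le k\le L+1\), so the invariant allows \(a_{j-L},\dots,a_{j-1},a_{j+1},\dots,a_{j+L}\) all to lie within \(o(1/n)\) of \(j/n\). In that state:
\begin{itemize}
\item slot \(j\) has acceptance probability \(o(1/n)\);
\item so does any legal overwrite of the inner entries (overwriting \(a_{j-1}\), say, needs a sample in \([a_{j-2},a_{j+1}]\));
\item only the outermost entries \(a_{j\pm L}\) can move with non-negligible probability, so un-crowding must propagate inward layer by layer;
\item meanwhile your leftmost-greedy placement performs interior splits of empty runs, and can keep creating new pinched gaps. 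This is exactly the mechanism behind the without-replacement lower bound.
\end{itemize}
Your claim that crowding ``decays at rate about \(L/n\) per sample'' is precisely the quantitative statement the proof needs, and no argument is given for it. Without (ii), the count of roughly \(2CL^2\) hits in \(W_j\) says nothing about slot \(j\) being filled, since a hit helps only if it can legally be written there.

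The fallback does not close the gap either. If replacement is used only at block boundaries, the interior of each block is again essentially a without-replacement game with the same pinching obstruction. So the \(L^2=\log n\) count of hits again does not translate into completion.

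\textbf{The missing idea is to use replacement \emph{inside} each block so that every local sample counts, and to measure progress combinatorially rather than through admissible-interval lengths.} The paper does this as follows.
\begin{itemize}
\item Take blocks of \(r\approx\sqrt{\log n}\) slots with disjoint value intervals of length \(s_j/n\), and run patience sorting in each block: replace the leftmost stored value exceeding \(x\), otherwise append.
\item The occupancy after local samples \(x_1,\dots,x_t\) is then exactly \(\min\{s_j,\LIS(x_1,\dots,x_t)\}\) (\cref{lem:truncated-patience}).
\item Hence a block fails to fill after \(K\) local samples only if they contain no increasing subsequence of length \(s_j\). This has probability at most \((s_j-1)^{2K}/K!\le(es_j^2/K)^K\) (\cref{lem:counting}).
\item Taking \(K\asymp r^2\asymp\log n\) makes this polynomially small. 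Collecting \(K\) local samples takes about \(Kn/r\asymp n\sqrt{\log n}\) global samples, and a union bound over blocks gives the tail bound.
\end{itemize}
This is the rigorous form of your numerology ``\(C\log n\) hits per window suffice''; your window width of order \(\sqrt{\log n}/n\) is the right scale.

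For the expectation, the paper restarts phases using the fact that prepending earlier samples can only increase \(\LIS\) (\cref{lem:fresh-phase}). Your coupon-collector restart also needs an argument. From an arbitrary with-replacement state, stale entries can block placements: for example, \(a_1\le\dots\le a_{n/2}\) all near \(1\) forces sequential repairs taking order \(n^2\) samples. So the restart cost is not \(nH_n\), although a crude \(O(n^2)\) bound would suffice for \(D\ge1\).
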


The above theorem shows that the with-replacement and without-replacement models are separated. Unlike in the without-replacement case, we do not know the correct order of $v_n^{\on{rep}}$. In particular, we leave both improving the upper bound beyond $O(n\sqrt {\log n})$ and the lower bound beyond $\Omega(n)$ as open problems. 

\subsection{Related work}
The online monotone array completion problem belongs to a broad family of online selection problems for random sequences. The closest classical line of work is the online selection of increasing subsequences, initiated by Samuels and Steele in 1981 \cite{SamuelsSteele81} and subsequently developed by various researchers over the next four decades; see, for instance, \cite{SamuelsSteele81,Gnedin99,BrussDelbaen01,BrussDelbaen04,ArlottoNguyenSteele15}.  In that problem, i.i.d.~$\on{Unif}[0,1]$ observations are revealed sequentially and each observation must be accepted or rejected immediately, with accepted observations required to form an increasing subsequence in their order of arrival. In particular, a ``dual version'', in which one minimizes the expected time needed to select an increasing subsequence of a prescribed length, was studied by Arlotto, Mossel and Steele \cite{ArlottoMosselSteele16}. 

Our problem differs from these subsequence-selection problems in one important respect.  A sample that is accepted in our setting need not be appended to the right end of the selected sequence.  Instead, it may be inserted into any still-empty position in the array provided the monotonicity constraint is not violated.  This additional spatial freedom completely changes the nature of the obstruction.  The difficulty is no longer building a long increasing sequence in the temporal order of arrivals, but rather, to fill all remaining gaps in a partially ordered array before the admissible value intervals around those gaps become too small.

There is also a separate online-sorting literature in which arriving real numbers
must be placed irrevocably into an array, and the objective is to minimize a final
cost such as
\[
\sum_{i=2}^n |a_i-a_{i-1}|.
\]
This line was initiated by Aamand, Abrahamsen, Beretta, and Kleist
\cite{AamandAbrahamsenBerettaKleist23}, and has since been developed in randomized,
stochastic, and high-dimensional variants
\cite{AbrahamsenBerceaBerettaKlausenKozma24,Hu26}.  Although this model also
involves online placement into an array, its objective and constraints are different
from ours: in our problem samples may be discarded, monotonicity must be preserved
throughout, and the quantity of interest is the completion time.

\subsection{Overview of the proof} The main work is in the lower-bound argument in \cref{sec:lower}.  The 
difficulty is that the strategy is completely arbitrary: it may be randomized,
history-dependent, and may place an accepted sample in any legal empty position subject to the monotonicity constraint. In particular, the partial array state need not have any simple or predictable structure. A lower bound must therefore identify some quantity which obstructs completion
uniformly over all possible states that any strategy can create. 

We accomplish this by tracking a potential on the current gaps.  At time \(t\), the empty
coordinates form maximal consecutive blocks.  If \(B\) is such a block, let
\(c(B)\) be its number of empty coordinates and let \(I(B)\) be the interval of
values which can still be legally placed in \(B\).  For a parameter
\(a\in(0,n+1)\), we define
\[
q_a(B):=(c(B)+1-a|I(B)|)_+,
\qquad
Q_t:=\sum_{B\in\cB_t}q_a(B),
\]
where the sum is taken over~$\cB_t$, the collection of blocks at time~$t$. Thus \(Q_t\) measures the extent to which the remaining empty blocks have too
much capacity relative to their available value intervals.  Initially \(Q_0\) is
large, while at completion \(Q_\tau=0\), so every strategy must eventually drive
this potential to zero.

The key point is that this potential cannot be decreased too quickly, even by a
fully adaptive strategy. Once we condition on the current state, every accepted
sample has exactly one of three effects on the block structure.  It may be an
\emph{interior move}, which places the sample inside a block and splits that
block into two nonempty child blocks.  It may be an \emph{edge move}, which
places the sample in the leftmost or rightmost empty position of a block of
capacity at least two.  Or it may be a \emph{singleton completion}, which fills
and deletes a block of capacity one. No other kind of update is possible because previously placed values are fixed and hence
serve as fixed boundaries for the adjacent blocks.  We emphasize that this is the point where the lower-bound argument uses the without-replacement assumption in a crucial way; in the with-replacement model, a later sample could move an already placed
boundary value, changing the admissible intervals of neighboring gaps without
filling a new coordinate.

It is easily seen (\cref{lem:easy_properties}) that an interior move cannot decrease $Q_t$. Hence, the only moves which can
actually reduce $Q_t$ are edge moves, which can decrease $Q_t$ by at most $1$, or singleton completions, which can decrease $Q_t$ by at most $2$. However, for the next sample to trigger an edge move or singleton completion which actually decreases $Q_t$, it must lie in a set of small measure.  Roughly, for an edge move to reduce the
potential in a block \(B\), the incoming sample must lie very close to one of the
two endpoints of \(I(B)\), while for a singleton completion to reduce the potential,
the admissible interval of that singleton must itself be short.  Formalizing this intuition, we show in the crucial \cref{lem:expected_positive_decrement} that no matter what the strategy is and how it 
has reached the present configuration,
\[
\E\bigl[(Q_t-Q_{t+1})_+\mid \mc F_t\bigr]
\le
\frac{2Q_t}{a},
\]
where~$\mc F_t$ is the filtration containing the entire present state of the partial array.
It remains to convert this one-step estimate into a lower bound on the completion time. Since $Q_t$ can decrease at most at a proportional rate, it is natural to measure progress on a logarithmic scale. Accordingly, we consider 
\[\Phi(Q_t) = \log(1+Q_t/2) = \int_0^{Q_t} \frac{ds}{s+2};\]
the shift by $2$ in $s+2$ matches the fact that $Q_t$ can decrease by at most $2$ in one step. With this choice, the
expected decrease of \(\Phi(Q_t)\) in any step is \(O(1/a)\), uniformly
over the current state and over the strategy.  When \(a = n - \Theta(n/\log n)\), initially
\(\Phi(Q_0)\) is of order \(\log n\), while at completion \(\Phi(Q_\tau)=0\), and therefore the strategy needs \(\Omega(a\log n)\) samples.  Optimizing \(a\) and keeping track of the constants gives the
\((1/2-o(1))n\log n\) lower bound in \cref{thm:main}.

The upper bound in \cref{sec:upper} is a nontrivial block version of the
coupon-collector strategy.  We divide the array into blocks, assign each block
its own interval of values, and try to complete all blocks in parallel.  The key
ingredient is the strategy used inside each block, which can be viewed as
reverse-engineering the obstruction from the lower bound.  Recall that the
lower bound identifies edge moves and singleton completions as the only moves
which can decrease the potential.  Motivated by this, the strategy inside each
block fills only from the two ends of the remaining empty run, so that every
accepted move is an edge move, except for the final singleton completion.  The
main invariant, \cref{lem:feasible-length}, says that each unfinished block has
an acceptance region of length essentially \(2/n\).  Thus the time to complete
the array is governed by the slowest among about \(n/b\) coupon-collector-type
blocks (where \(b\) is a parameter we will eventually take to be \(b = \Theta(\sqrt{\log n})\) ), and is $(n/2) \log n$ up to lower-order terms. 

Finally, the with-replacement upper bound in
\cref{sec:upper-bound-with-replacement} uses a different block strategy.  We take
blocks of size $r \approx \sqrt{\log n}$
and run the classical patience-sorting algorithm (see, e.g.,~\cite{AldousDiaconis99}) inside each block,~i.e., a new value replaces the
leftmost stored value larger than it, or is appended to the right if no such stored value
exists.  The speedup over the without-replacement strategy comes from the fact
that a block no longer has to wait for samples in small edge windows.  Every
sample falling in the block's value interval can improve the local configuration.
After \(K\) local samples, a block is full as soon as those samples contain an
increasing subsequence of length \(r\).  Thus the failure probability is
controlled by
\[
\P(\LIS(\pi_K)<r)
\le
\frac{(r-1)^{2K}}{K!}
\le
\left(\frac{er^2}{K}\right)^K,
\]
where $\LIS(\pi_K)$ denotes the length of a longest increasing subsequence of a uniform permutation of length $K$. 
With \(r^2\asymp\log n\), taking \(K \asymp \log n\) makes this probability
polynomially small in \(n\), which suffices for a union bound. Since a
block sees local samples with probability about
\[
\frac{r}{n}\asymp \frac{\sqrt{\log n}}{n},
\]
collecting \(K\asymp\log n\) local samples takes about
\[
\frac{K}{r/n}\asymp n\sqrt{\log n}
\]
global samples.  
\subsection{Organization}

In \cref{sec:lower} we prove the lower bound in \cref{thm:main}, in \cref{sec:upper} we prove the upper bound in \cref{thm:main}, and in \cref{sec:upper-bound-with-replacement}, we prove the with-replacement upper bound \cref{thm:replacement-upper}.

\subsection{Notation}
All logarithms are natural. For $x \in \mb{R}$, we denote $(x)_+ := \max\{x,0\}$.
We will also make use of asymptotic notation. All asymptotic notation is with respect to $n \to \infty$. For functions $f,g$, $f = O(g)$ means that $\limsup_{n\to\infty} f/g < \infty$; $f = \Omega(g)$  means that $\liminf_{n\to\infty} f/g > 0$; and $f = \Theta(g)$ means that both $f = O(g)$ and $f = \Omega(g)$ hold.% For parameters $\varepsilon, \delta$, we write $\varepsilon \ll \delta$ to mean that $\varepsilon \le c(\delta)$ for a sufficient function $c$. A chain $\alpha \ll \beta \ll \gamma$ should be read from right to left.  

\subsection{Acknowledgments} The algorithm in \cref{thm:replacement-upper} was discovered by ChatGPT Pro. The same model was also used for various aspects of research, including literature overview, and assisted in manuscript preparation. V.J.~is partially supported by NSF grant DMS-2237646. C.M.~is partially supported by a Simons Dissertation Fellowship. Most of the research was conducted when D.K. was visiting the University of Illinois Chicago.

\section{Lower bound}
\label{sec:lower}

In this section we prove the lower bound in \cref{thm:main}. The argument tracks a potential associated with the current partial array state. Each maximal block \(B\) of contiguous unfilled array positions has a capacity \(c(B)\), namely the number of empty positions in the block, and an admissible value interval \(I(B)\), namely the set of values that can still be placed in that block without violating monotonicity. The potential penalizes blocks for which the capacity is large compared to the available length \(|I(B)|\). The main observation is that this potential cannot decrease under interior placements: it can decrease only when a sample is placed at the edge of an active block, or when a singleton block is completed. The set of values for which this can happen has small total measure.  This gives a quantitative bound on the rate at which the potential can decay, and hence a lower bound on the completion time.

\medskip

We begin with some notation. Fix a possibly randomized adaptive strategy $S$ and let~$K_t$ denote the number of accepted values after~$t$ turns. Let $\tau$ be the (random) finishing time, i.e.
\[
\tau:=\inf\{t\ge 0:K_t=n\}.
\]
At time $t$, the empty~$*$ array positions form a disjoint union of maximal consecutive intervals, each of which we refer to as a block, and we let~$\cB_t$ denote the collection of blocks after~$t$ turns. For each such block $B$:
\begin{itemize}
\item $c(B)$ denotes its capacity, i.e.~the number of empty positions in that block,
\item $I(B)\subseteq [0,1]$ denotes the interval of values which may be legally placed in $B$; this is determined by the nearest filled entries to the left and right of $B$ (or by the boundaries $0$ and $1$), and
\item $L(B):=|I(B)|$ denotes the length of this interval.
\end{itemize}
The intervals~$I(B), B \in \mc{B}_t$ are disjoint, up to endpoints. An example is given in \cref{fig:partial-state}.

\begin{figure}[ht]
\centering
\begin{tikzpicture}[
  x=1.15cm,
  cell/.style={draw, minimum width=1.15cm, minimum height=1cm, anchor=south west, inner sep=0pt},
  filled/.style={cell, fill=gray!20},
  empty/.style={cell}
]
  % Array cells (positions 1..10), filled at 1, 4, 9; left endpoint filled
  \node[filled] at (0,0) {$0.10$};
  \node[empty]  at (1,0) {$\ast$};
  \node[empty]  at (2,0) {$\ast$};
  \node[filled] at (3,0) {$0.35$};
  \node[empty]  at (4,0) {$\ast$};
  \node[empty]  at (5,0) {$\ast$};
  \node[empty]  at (6,0) {$\ast$};
  \node[empty]  at (7,0) {$\ast$};
  \node[filled] at (8,0) {$0.80$};
  \node[empty]  at (9,0) {$\ast$};

  % Position indices
  \foreach \i in {1,...,10} {
    \node[font=\scriptsize, above] at (\i-0.5, 1.02) {$\i$};
  }

  % Block braces: B_1 = positions 2-3, B_2 = positions 5-8, B_3 = position 10
  \draw[decorate, decoration={brace, amplitude=5pt, mirror}]
    (1.05, -0.1) -- node[below=6pt] {$B_1$} (2.95, -0.1);
  \draw[decorate, decoration={brace, amplitude=5pt, mirror}]
    (4.05, -0.1) -- node[below=6pt] {$B_2$} (7.95, -0.1);
  \draw[decorate, decoration={brace, amplitude=5pt, mirror}]
    (9.05, -0.1) -- node[below=6pt] {$B_3$} (9.95, -0.1);

  % Per-block parameter labels
  \node[align=center, font=\footnotesize] at (2.0, -1.55)
    {$c=2$\\$L=0.25$\\$I=[0.10,\,0.35]$};
  \node[align=center, font=\footnotesize] at (6.0, -1.55)
    {$c=4$\\$L=0.45$\\$I=[0.35,\,0.80]$};
  \node[align=center, font=\footnotesize] at (9.5, -1.55)
    {$c=1$\\$L=0.20$\\$I=[0.80,\,1]$};
\end{tikzpicture}
\caption{A partial array state for $n=10$ in which the three filled values $0.10<0.35<0.80$ occupy positions $1,4,9$. There are three blocks $B_1,B_2,B_3$, with~$c=c(B)$,~$I=I(B)$, and ~$L=L(B)=|I(B)|$ displayed.}
\label{fig:partial-state}
\end{figure}

\medskip

When a new sample is accepted for placement in the array, there are three possibilities: 
\begin{itemize}
\item an \emph{interior move} places the sample in a position which splits the chosen block into two nonempty child blocks;
\item an \emph{edge move} places the sample in the leftmost or rightmost empty position of a block of capacity at least \(2\); 
\item a \emph{singleton completion} fills a block of capacity \(1\).
\end{itemize}
In the example in \cref{fig:partial-state}, a placement in positions~$6,7$ would be an interior move, in positions~$2,3,5,8$ an edge move, and in position~$10$  would be a singleton completion.

\medskip

Fix once and for all a parameter $a\in(0,n+1)$. For a block $B$, define
\[
q_a(B):= \bigl(c(B)+1-aL(B)\bigr)_+,
\]
and set
\[
Q_t:=\sum_{B \in \cB_t} q_a(B).
\]
Thus \(q_a(B)\) penalizes a block whose capacity is large relative to the length of its admissible interval.

\begin{lemma}\label{lem:easy_properties}
The potential $Q_t$ has the following properties.
\begin{enumerate}
\item for every $t\ge 0$,
\[
0\le Q_t\le n+1,
\]
\item $Q_0= n+1-a $ and $Q_\tau=0$,
\item an interior move does not decrease $Q_t$,
\item an edge move decreases $Q_t$ by at most $1$, and
\item a singleton completion decreases $Q_t$ by at most $2$.
\end{enumerate}
\end{lemma}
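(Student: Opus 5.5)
The plan is to verify each item directly from the definition \(q_a(B)=(c(B)+1-aL(B))_+\), using two structural facts. First, the intervals \(I(B)\), \(B\in\cB_t\), are disjoint up to endpoints, so \(\sum_B L(B)\le 1\). Second, \(\sum_B c(B)=n-K_t\).

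For (1), nonnegativity is immediate. For the upper bound we drop the \(-aL(B)\) terms, which gives
\[
Q_t\le \sum_B (c(B)+1)=(n-K_t)+|\cB_t|.
\]
Blocks are separated by filled entries, so \(|\cB_t|\le K_t+1\), and hence \(Q_t\le n+1\). For (2), at time \(0\) there is a single block with \(c=n\) and \(L=1\). Since \(a<n+1\), we get \(Q_0=(n+1-a)_+=n+1-a\). At \(\tau\) there are no blocks, so the empty sum gives \(Q_\tau=0\).

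For (3), an interior move places a value \(x\in I(B)\) inside \(B\) and creates children \(B_1,B_2\) with \(c(B_1)+c(B_2)=c(B)-1\). Their intervals are \(I(B)\cap[\cdot,x]\) and \(I(B)\cap[x,\cdot]\), so \(L(B_1)+L(B_2)=L(B)\). Therefore
\[
\bigl(c(B_1)+1-aL(B_1)\bigr)+\bigl(c(B_2)+1-aL(B_2)\bigr)=c(B)+1-aL(B).
\]
By subadditivity of \((\cdot)_+\), namely \((u)_+ + (v)_+\ge (u+v)_+\), we get \(q_a(B_1)+q_a(B_2)\ge q_a(B)\). All other blocks are unchanged, so \(Q\) does not decrease.

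For (4), an edge move on \(B\) with \(c(B)\ge2\) replaces \(B\) by one block \(B'\) with \(c(B')=c(B)-1\) and \(I(B')\subseteq I(B)\), so \(L(B')\le L(B)\). Then
\[
c(B')+1-aL(B')\ge c(B)+1-aL(B)-1.
\]
Since \(x\mapsto (x)_+\) is monotone and \(1\)-Lipschitz, \(q_a(B')\ge q_a(B)-1\). For (5), a singleton completion removes a block with \(c=1\), so the decrease is \(q_a(B)=(2-aL(B))_+\le 2\).

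The only mildly delicate point is the bookkeeping in (1) and (3): checking that child intervals partition \(I(B)\) exactly, including the degenerate case of ties at the boundary values, and that the block count is at most \(K_t+1\). Neither is a real obstacle.
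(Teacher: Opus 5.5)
Your proposal is correct and matches the paper's proof item by item. It uses the same subadditivity of $(\cdot)_+$ for interior moves, the same shift-by-one bound for edge moves, and the same $(2-aL(B))_+\le 2$ bound for singleton completions; your observation $|\cB_t|\le K_t+1$ simply spells out the count $|\cB_t|+\sum_{B}c(B)\le n+1$ that the paper asserts without justification.
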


\begin{proof}
We prove the assertions in order. 
\begin{enumerate}
    \item First, notice that for any~$t \geq 0$ we have that
    \[
    |\cB_t| + \sum_{B \in \cB_t} c(B) \leq n+1.
    \]
    Also \(0\le q_a(B)\le c(B)+1\) for every \(B\in\cB_t\).
Hence
\[
0\le Q_t\le \sum_{B\in\cB_t}(c(B)+1)\le n+1.
\]
\item Initially~$\cB_0$ consists of a single block $B$ with $c(B)=n$ and $L(B)=1$, so
\[
Q_0= (n+1-a)_+ = n+1-a.
\]
At time~$\tau$,~$\cB_\tau = \emptyset$ and hence $Q_\tau=0$.
\item Consider an interior move occurring in a block~$B \in \cB_t$; let the child blocks be $B_1,B_2$. Then
\[
c(B_1)+c(B_2)=c(B) - 1,\qquad L(B_1)+L(B_2)=L(B).
\]
Writing
\[
y_i:=c(B_i)+1-aL(B_i)\quad (i=1,2),\qquad y:=c(B)+1-aL(B),
\]
we have $y_1+y_2=y$. Therefore
\[
q_a(B_1)+q_a(B_2)
= (y_1)_+ + (y_2)_+
\ge (y_1 + y_2)_+
= y_+
= q_a(B),
\]
so an interior move cannot decrease $Q_t$.

\item Consider an edge move occurring in a block~$B \in \cB_t$, and denote the single child block by~$B'$. Let~$x = L(B) - L(B')\geq 0$. Then $B'$ satisfies
\[
c(B')=c(B)-1,\qquad L(B')=L(B)-x,
\]
so
\begin{equation}\label{eq:change-in-q}
c(B')+1-aL(B')=\bigl(c(B)+1-aL(B)\bigr)-1+ax.
\end{equation}
Since \(ax\ge0\), taking positive parts gives
\[
q_a(B')\ge q_a(B)-1.
\]
Thus an edge move decreases \(Q_t\) by at most \(1\).

\item Consider a singleton completion move which deletes a block~$B \in \cB_t$. In this case~$c(B)=1$, so
\[
q_a(B)=\bigl(2-aL(B)\bigr)_+\le 2.
\]
Completion removes $B$ entirely, hence it decreases $Q_t$ by at most $2$. \qedhere
\end{enumerate}
\end{proof}

The next lemma is the key estimate. The sequence~$Q_t$ must decrease towards~$Q_\tau = 0$, but can only do so in steps of size at most~$2$. In fact,~$Q_t$ can only decrease at all when the player makes edge moves or completes singleton blocks. The point is that, conditioned on the present state, the set of sample values which can trigger such a move has total measure \(O(Q_t/a)\). More precisely, let  $(\mathcal{F}_t)_{t\ge 0}$ be the natural filtration generated by the samples revealed through time \(t\), together with any auxiliary randomness of the strategy used through time \(t\).

% The following lemma controls the decay of $Q_t$.

\begin{lemma}\label{lem:expected_positive_decrement}
For every \(t\ge0\) and every possibly randomized adaptive strategy,
\[
\E\bigl[(Q_t-Q_{t+1})_+\mid \mathcal F_t\bigr]
\le
\frac{2Q_t}{a}.
\]
\end{lemma}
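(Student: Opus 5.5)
Conditioned on $\mathcal F_t$, the state (blocks, capacities, intervals) is fixed. The next sample $X_{t+1}$ is uniform and independent of $\mathcal F_t$, and of whatever auxiliary randomness the strategy uses at step $t+1$. The plan is to bound $(Q_t-Q_{t+1})_+$ pointwise by a deterministic function $g(X_{t+1})$ of the sample alone, valid for every possible action of the strategy, and then integrate. Since each sample changes at most one block, \cref{lem:easy_properties} gives the following. A discarded sample or an interior move contributes $0$. An edge move in $B$ contributes at most $q_a(B)-q_a(B')\le 1$. A singleton completion of $B$ contributes at most $q_a(B)\le 2$. Moreover, a sample with value $x$ can be placed only in a block $B$ with $x\in I(B)$, and these intervals are disjoint up to endpoints. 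So for almost every $x$ there is a single candidate block $B_x$, and $g(x)$ only needs to bound the decrease over the possible moves inside $B_x$.

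\emph{Edge moves.} Let $c(B)\ge 2$ and write $y=c(B)+1-aL(B)$. If the sample goes to the leftmost empty position, the new interval is $[x,\text{right end}]$, so in \eqref{eq:change-in-q} the quantity $ax$ is replaced by $a(x-\ell_B)$, where $\ell_B$ is the left endpoint of $I(B)$. The new value is $y-1+a(x-\ell_B)$. Hence the decrease $y_+-(y-1+a d)_+$ is positive only if $y>0$ and $d<1/a$, where $d=x-\ell_B$. In that case it is at most $\min\{1,y\}$, and more precisely at most $(1-ad)\wedge y$. The rightmost placement is symmetric, with $d$ the distance to the right endpoint. The set of $x$ for which some edge move decreases $Q$ is therefore contained in the two windows of width $1/a$ at the ends of $I(B)$, and it is empty unless $q_a(B)>0$. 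Bounding the decrease by $\min\{1,q_a(B)\}$ on these windows gives a contribution of at most $\frac{2}{a}\min\{1,q_a(B)\}\le 2q_a(B)/a$. For the sharper form, one can integrate $(1-ad)_+$ to get $1/(2a)$ per end.

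\emph{Singleton completions.} Let $c(B)=1$. The decrease is exactly $q_a(B)=(2-aL(B))_+$, and it happens only when $x\in I(B)$, a set of measure $L(B)$. The contribution is therefore $L(B)\,(2-aL(B))_+$, which must be compared with $2q_a(B)/a=2(2-aL)_+/a$. Since the decrease is nonzero only when $aL<2$, i.e. $L<2/a$, we get $L(2-aL)\le \frac{2}{a}(2-aL)$, as required.

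Summing over blocks gives $\E[(Q_t-Q_{t+1})_+\mid\mathcal F_t]\le \sum_B 2q_a(B)/a=2Q_t/a$. Randomization and adaptivity do not matter, because the bound $g$ holds for every action. The main point to check carefully is the edge-move windows. Both ends of a block may be available, and the decrease can be positive on each window. We must also confirm that the bound per block is proportional to $q_a(B)$ and not merely $O(1)$. This works because the decrease vanishes when $q_a(B)=0$ and is capped by $q_a(B)$ otherwise. Endpoint overlaps between intervals have measure zero and can be ignored.
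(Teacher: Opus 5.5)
Your proof is correct and follows essentially the same route as the paper. You condition on $\mathcal F_t$ and localize the decrease to the block whose interval contains $X_{t+1}$. For $c(B)\ge 2$, you show via \eqref{eq:change-in-q} that any decrease forces the sample into the two end-windows of width $1/a$, with the decrease capped by $q_a(B)$. For $c(B)=1$, you use that $q_a(B)>0$ forces $L(B)<2/a$. Your sharper cap $\min\{1,q_a(B)\}$ and the optional integration of $(1-ad)_+$ are harmless refinements not needed for the stated bound.
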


\begin{proof}
Condition on \(\mathcal F_t\), so that the current state is fixed. Throughout this proof, all expectations are $\mb{E}[\cdot\mid\mc{F}_t]$; we omit the dependence on $\mc{F}_t$. Set
\[
D_t:=(Q_t-Q_{t+1})_+.
\]
It suffices to prove that, for every block $B \in \mc{B}_t$,
\begin{equation}\label{eq:one-block-control-on-decrement} \mb{E}[D_t \mbf1\{ X_{t+1} \in I(B) \} ] \leq \frac{2q_a(B)}{a}. \end{equation}
Indeed, since the intervals \(I(B)\), \(B\in\mc B_t\), are disjoint up to endpoints, and since \(X_{t+1}\) has a continuous distribution, the events \(\{X_{t+1}\in I(B)\}\) partition the event that \(X_{t+1}\) lies in an admissible interval, up to a null set. Therefore
\[
\E[D_t\mid \mc F_t]
=
\sum_{B\in\mc B_t}
\E\bigl[D_t\mbf1\{X_{t+1}\in I(B)\}\mid \mc F_t\bigr].
\]
Combining this identity with \eqref{eq:one-block-control-on-decrement} gives
\[
\E[D_t\mid \mc F_t]
\le
\sum_{B\in\mc B_t}\frac{2q_a(B)}{a}
=
\frac{2Q_t}{a},
\]
as desired.

We now prove \eqref{eq:one-block-control-on-decrement}. Let $B \in \mc{B}_t$. First suppose that \(c(B)=1\).  On the event \(X_{t+1}\in I(B)\), a placement in \(B\) can remove at most the contribution \(q_a(B)\).  Thus
\[
D_t\mbf1\{X_{t+1}\in I(B)\}
\le
q_a(B)\mbf1\{X_{t+1}\in I(B)\}
\]
pointwise.  Hence
\[
\E\bigl[D_t\mbf1\{X_{t+1}\in I(B)\}\bigr]
\le
q_a(B)L(B).
\]
If \(q_a(B)=0\), this already proves \eqref{eq:one-block-control-on-decrement}.  If \(q_a(B)>0\), then \(2-aL(B)>0\), and therefore \(L(B)<2/a\).  This again gives \eqref{eq:one-block-control-on-decrement}.

Now suppose that \(c(B)\ge2\), and let \(I(B)=[\ell,u]\).  By \cref{lem:easy_properties}, the potential can decrease only if the strategy makes an edge move in \(B\).  Suppose such a move is made on the left edge, and write \(X_{t+1}=\ell+x\).  By \eqref{eq:change-in-q}, if \(x\ge1/a\), then the block contribution cannot decrease.  Thus a left-edge move which decreases \(Q_t\) must have \(x<1/a\).  The same argument applies at the right edge.  Consequently, if a placement in \(B\) decreases \(Q_t\), then
\[
X_{t+1}\in
\left[\ell,\ell+\frac1a\right)
\cup
\left(u-\frac1a,u\right].
\]
Moreover, the decrease caused by \(B\) is at most \(q_a(B)\).  Therefore
\[
\E\bigl[D_t\mbf1\{X_{t+1}\in I(B)\}\bigr]
\le
q_a(B)\,
\P\left(
X_{t+1}\in
\left[\ell,\ell+\frac1a\right)
\cup
\left(u-\frac1a,u\right]
\right)
\le
\frac{2q_a(B)}{a}. \qedhere
\]

\end{proof}

We now convert the one-step decrement estimate into a lower bound on the time
needed to drive the potential to zero.  Define
\[
\Phi(q):=\int_0^q \frac{ds}{s+2}
=
\log\left(1+\frac q2\right),
\qquad q\ge0.
\]

From now on we extend the process after completion by setting \(Q_t=0\) for all \(t\ge \tau\).

\begin{lemma}\label{lem:phi-progress}
For every integer
\(N\ge1\),
\begin{equation}\label{eq:finite-progress-expectation}
\Phi(Q_0)-\E[\Phi(Q_N)]
\le
\frac2a\sum_{t=0}^{N-1}\P(t<\tau).
\end{equation}
Moreover, for every integer \(N\ge1\) and every \(\theta>0\),
\begin{equation}\label{eq:finite-progress-tail}
\P(\tau\le N)
\le
\exp\left(
-\theta\Phi(Q_0)
+
\frac{2N}{a}(e^\theta-1)
\right).
\end{equation}
\end{lemma}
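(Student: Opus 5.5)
The plan is to prove a one-step bound on $\Phi$ and then telescope it for \eqref{eq:finite-progress-expectation} and exponentiate it for \eqref{eq:finite-progress-tail}.

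\textbf{One-step bound.} Since $\Phi$ is increasing with $\Phi'(s)=1/(s+2)$, any decrease satisfies
\[
\Phi(Q_t)-\Phi(Q_{t+1})\le \int_{Q_{t+1}}^{Q_t}\frac{ds}{s+2}\le \frac{(Q_t-Q_{t+1})_+}{Q_{t+1}+2}.
\]
By \cref{lem:easy_properties} every step lowers $Q$ by at most $2$, so $Q_{t+1}+2\ge Q_t$. Hence
\[
\bigl(\Phi(Q_t)-\Phi(Q_{t+1})\bigr)_+\le \frac{D_t}{Q_t},
\]
and we interpret this as $0$ when $Q_t=0$, since then $D_t=0$. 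Combining this with \cref{lem:expected_positive_decrement} gives
\[
\E\bigl[(\Phi(Q_t)-\Phi(Q_{t+1}))_+\mid\mc F_t\bigr]\le \frac2a\,\mbf1\{t<\tau\}.
\]
The indicator appears because after $\tau$ the process is frozen at $0$, and $\{t<\tau\}\in\mc F_t$. This step, in which the shift by $2$ is exactly what lets the denominator $Q_{t+1}+2$ absorb $Q_t$, is the crux of the argument.

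\textbf{The expectation bound.} Telescoping $\Phi(Q_0)-\Phi(Q_N)$ as a sum of increments and bounding each increment by its positive part, then taking expectations and applying the tower property, gives \eqref{eq:finite-progress-expectation} directly.

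\textbf{The tail bound.} Set $Y_t:=(\Phi(Q_t)-\Phi(Q_{t+1}))_+$. Then $Y_t\in[0,1]$, since $\Phi(q)-\Phi(q-2)\le 2/q\le 1$ once $q\ge2$. For small $q$ we use $\Phi(q)\le\log 2<1$ when $q<2$; more cleanly, $Y_t\le D_t/Q_t\le \min(2,Q_t)/Q_t\le1$ for $Q_t\ge2$, and for $Q_t<2$ we have $Y_t\le\Phi(Q_t)\le\log2$. By convexity, $e^{\theta y}\le 1+y(e^\theta-1)$ on $[0,1]$, so
\[
\E[e^{\theta Y_t}\mid\mc F_t]\le 1+\frac2a(e^\theta-1)\le \exp\Bigl(\frac2a(e^\theta-1)\Bigr).
\]
It follows that $M_k:=\exp\bigl(\theta\sum_{t<k}Y_t-\frac{2k}{a}(e^\theta-1)\bigr)$ is a supermartingale with $M_0=1$, so $\E M_N\le1$.

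On the event $\{\tau\le N\}$ we have $Q_N=0$, hence $\sum_{t<N}Y_t\ge\Phi(Q_0)$. Markov's inequality applied to $e^{\theta\sum_{t<N}Y_t}$ then gives
\[
\P(\tau\le N)\le e^{-\theta\Phi(Q_0)}\,\E\, e^{\theta\sum_{t<N}Y_t}\le \exp\Bigl(-\theta\Phi(Q_0)+\frac{2N}{a}(e^\theta-1)\Bigr),
\]
which is \eqref{eq:finite-progress-tail}.
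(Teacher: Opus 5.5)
Your proposal is correct and follows essentially the same route as the paper: the same one-step bound $(\Phi(Q_t)-\Phi(Q_{t+1}))_+\le D_t/Q_t$ from the shift by $2$ in $\Phi$, combined with \cref{lem:expected_positive_decrement} to get conditional expectation at most $\frac2a\mbf1\{t<\tau\}$. Telescoping then gives the expectation bound, and the convexity estimate $e^{\theta y}\le 1+(e^\theta-1)y$ on $[0,1]$, iterated and followed by Markov's inequality, gives the tail bound. The differences are cosmetic: you phrase the iteration as an explicit supermartingale, and you justify $Y_t\le1$ by a case split rather than by noting that $\Phi$ is $1/2$-Lipschitz.
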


\begin{proof}
Set
\[
D_t:=(Q_t-Q_{t+1})_+,
\qquad
Z_t:=\bigl(\Phi(Q_t)-\Phi(Q_{t+1})\bigr)_+.
\]
We first prove the one-step estimate
\begin{equation}\label{eq:Z-one-step}
\E[Z_t\mid \mc F_t]\le \frac2a\,\mbf1_{\{t<\tau\}}.
\end{equation}
If \(Q_t=0\), then \(Q_{t+1}\ge0\), and hence
\(\Phi(Q_t)-\Phi(Q_{t+1})\le0\). Thus \(Z_t=0\). If \(Q_t>0\), then
\[
Z_t
\le
\Phi(Q_t)-\Phi(Q_t-D_t).
\]
By \cref{lem:easy_properties}, a single step can decrease \(Q_t\) by at most \(2\).  Hence every
\(s\in[Q_t-D_t,Q_t]\) satisfies \(s+2\ge Q_t\), and therefore
\[
\Phi(Q_t)-\Phi(Q_t-D_t)
=
\int_{Q_t-D_t}^{Q_t}\frac{ds}{s+2}
\le
\frac{D_t}{Q_t}.
\]
Taking conditional expectations and using \cref{lem:expected_positive_decrement} gives
\[
\E[Z_t\mid \mc F_t]
\le
\frac1{Q_t}\E[D_t\mid \mc F_t]
\le
\frac2a
\]
on the event \(\{Q_t>0\}\), which is contained in \(\{t<\tau\}\).  This proves \eqref{eq:Z-one-step}.

Now
\[
\Phi(Q_t)-\Phi(Q_{t+1})\le Z_t,
\]
so summing \eqref{eq:Z-one-step} from \(t=0\) to \(N-1\) gives
\[
\Phi(Q_0)-\E[\Phi(Q_N)]
=
\sum_{t=0}^{N-1}\E[\Phi(Q_t)-\Phi(Q_{t+1})]
\le
\sum_{t=0}^{N-1}\E[Z_t]
\le
\frac2a\sum_{t=0}^{N-1}\P(t<\tau).
\]
This proves \eqref{eq:finite-progress-expectation}.

It remains to prove the tail estimate.  Since \(\Phi\) is \(1/2\)-Lipschitz and \(Q_t\)
can decrease by at most \(2\) in one step, we have \(0\le Z_t\le1\).  Thus, for every
\(\theta>0\),
\[
e^{\theta Z_t}\le 1+(e^\theta-1)Z_t.
\]
Using \eqref{eq:Z-one-step}, and then dropping the indicator, gives
\[
\E[e^{\theta Z_t}\mid \mc F_t]
\le
1+(e^\theta-1)\E[Z_t\mid \mc F_t]
\le
\exp\left(\frac2a(e^\theta-1)\right).
\]
Iterating this, we have
\[
\E\exp\left(\theta\sum_{t=0}^{N-1}Z_t\right)
\le
\exp\left(\frac{2N}{a}(e^\theta-1)\right).
\]
On the event \(\{\tau\le N\}\), we have \(Q_N=0\), so that
\[
\Phi(Q_0)
=
\sum_{t=0}^{N-1}\bigl(\Phi(Q_t)-\Phi(Q_{t+1})\bigr)
\le
\sum_{t=0}^{N-1}Z_t.
\]
Therefore, by Markov's inequality, \[
\P(\tau\le N)
\le
\exp(-\theta\Phi(Q_0))
\E\exp\left(\theta\sum_{t=0}^{N-1}Z_t\right),
\]
which proves \eqref{eq:finite-progress-tail}.
\end{proof}

We now have all the ingredients to prove the lower bound. We first prove the lower bound in expectation and then prove the high-probability version.

\begin{proposition}\label{prop:lower}
For every possibly randomized adaptive strategy \(S\) with completion time \(\tau\),
\begin{equation}
\label{eq:lower-bound-sup-version}
\E[\tau]
\ge
\sup_{0<a<n+1}
\frac a2
\log\left(1+\frac{n+1-a}{2}\right) \geq \left(\frac12- o(1)\right)n\log n.
\end{equation}
\end{proposition}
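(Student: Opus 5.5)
The plan is to apply \eqref{eq:finite-progress-expectation} from \cref{lem:phi-progress} with a fixed \(a\in(0,n+1)\) and let \(N\to\infty\). By \cref{lem:easy_properties}, \(Q_0=n+1-a\), so \(\Phi(Q_0)=\log\bigl(1+\tfrac{n+1-a}{2}\bigr)\). We may assume \(\E[\tau]<\infty\), since otherwise there is nothing to prove. Then \(\tau<\infty\) almost surely, and with the convention \(Q_t=0\) for \(t\ge\tau\) we have \(\Phi(Q_N)\to 0\) almost surely. Because \(0\le\Phi(Q_N)\le\Phi(n+1)\), bounded convergence gives \(\E[\Phi(Q_N)]\to0\).

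On the right-hand side, \(\sum_{t=0}^{N-1}\P(t<\tau)\) increases to \(\sum_{t\ge0}\P(\tau>t)=\E[\tau]\). Letting \(N\to\infty\) in \eqref{eq:finite-progress-expectation} therefore yields
\[
\log\left(1+\frac{n+1-a}{2}\right)\le \frac{2}{a}\E[\tau],
\]
which rearranges to \(\E[\tau]\ge \frac a2\log\bigl(1+\frac{n+1-a}{2}\bigr)\). This holds for every \(a\in(0,n+1)\), so taking the supremum over \(a\) gives the first inequality in \eqref{eq:lower-bound-sup-version}.

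For the asymptotic inequality it is enough to exhibit one good choice of \(a\). I would take \(a=n-n/\log n\). Then \(n+1-a\ge n/\log n\), so
\[
\log\left(1+\frac{n+1-a}{2}\right)\ge \log\left(\frac{n}{2\log n}\right)=\log n-\log\log n-\log 2=(1-o(1))\log n.
\]
Since also \(a=(1-o(1))n\), the right-hand side is at least \(\tfrac12(1-o(1))n\cdot(1-o(1))\log n=(\tfrac12-o(1))n\log n\).

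The only delicate step is the limit \(N\to\infty\), namely justifying \(\E[\Phi(Q_N)]\to0\) and identifying the tail sum with \(\E[\tau]\). Both follow from the bounded convergence and monotone convergence arguments given above. The remaining steps are direct substitution.
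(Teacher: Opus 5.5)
Your proposal is correct and follows the paper's proof essentially verbatim: the same limit $N\to\infty$ in \eqref{eq:finite-progress-expectation}, justified by bounded convergence and the tail-sum formula. Your choice $a=n-n/\log n$ differs from the paper's $a=n+1-n/\log n$ only cosmetically.
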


\begin{proof}
If \(\E[\tau]=\infty\), there is nothing to prove.  Hence assume that
\(\E[\tau]<\infty\).  In particular, \(\tau<\infty\) almost surely. Applying \eqref{eq:finite-progress-expectation} and then letting \(N\to\infty\), we get
\[
\Phi(Q_0)
\le
\frac2a\,\E[\tau].
\]
Indeed, \(0\le Q_N\le n+1\) for all \(N\), while \(Q_N\to0\) almost surely because
the process is finished after the almost surely finite time \(\tau\).  Therefore
\(\E[\Phi(Q_N)]\to0\) by dominated convergence.  Also,
\[
\sum_{t=0}^\infty \P(t<\tau)=\E[\tau].
\]
By \cref{lem:easy_properties}, \(Q_0=n+1-a\).  Hence
\[
\E[\tau]
\ge
\frac a2 \Phi(Q_0)
=
\frac a2
\log\left(1+\frac{n+1-a}{2}\right).
\]
Taking the supremum over $a$ gives the first inequality in \eqref{eq:lower-bound-sup-version}. The second inequality follows from the choice $a = n+1 - n/\log n$.
\end{proof}

\begin{proposition}\label{prop:hp-lower}
For every fixed \(\eps\in(0,1/2)\), there exists \(n_0(\eps)\) such that, for all
\(n\ge n_0(\eps)\) and every possibly randomized adaptive strategy \(S\) with
completion time \(\tau\),
\[
\P\left(\tau\le \left(\frac12-\eps\right)n\log n\right)
\le
n^{-2\eps^2}.
\]
\end{proposition}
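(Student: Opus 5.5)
The plan is to apply the tail estimate \eqref{eq:finite-progress-tail} of \cref{lem:phi-progress} with $N:=\floor{(\frac12-\eps)n\log n}$ and the same choice of $a$ as in \cref{prop:lower}, namely $a=n+1-n/\log n$. With $\theta>0$ fixed depending only on $\eps$, we then bound the exponent
\[
-\theta\Phi(Q_0)+\frac{2N}{a}(e^\theta-1).
\]
We may take $N\ge1$ since $n$ is large.

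First compute the two ingredients. By \cref{lem:easy_properties}, $Q_0=n+1-a=n/\log n$, so
\[
\Phi(Q_0)=\log\Bigl(1+\frac{n}{2\log n}\Bigr)\ge \log n-\log\log n-\log 2 .
\]
Next, $n/a=1+O(1/\log n)$, hence
\[
\frac{2N}{a}\le (1-2\eps)\log n\cdot\frac{n}{a}\le (1-2\eps)\log n+O(1).
\]

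Now choose $\theta$ to optimize the leading order. Heuristically, the exponent looks like $-\theta\log n+(1-2\eps)(e^\theta-1)\log n$, which is minimized at $e^\theta=1/(1-2\eps)$, i.e.\ $\theta=-\log(1-2\eps)$; this is legitimate since $\eps<1/2$. With this choice, $(1-2\eps)(e^\theta-1)=2\eps$, and the exponent is at most
\[
-\bigl(\theta-2\eps\bigr)\log n+\theta\log\log n+O_\eps(1).
\]

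The remaining step, and the only delicate point, is absorbing the lower-order errors, since we want exactly $n^{-2\eps^2}$ with no loss. Write $x=2\eps\in(0,1)$. Then
\[
\theta-2\eps=-\log(1-x)-x=\sum_{k\ge2}\frac{x^k}{k}\ge \frac{x^2}{2}+\frac{x^3}{3}=2\eps^2+\frac{8\eps^3}{3}.
\]
The strict slack $\frac{8\eps^3}{3}\log n$ dominates $\theta\log\log n+O_\eps(1)$ once $n\ge n_0(\eps)$. Hence the exponent is at most $-2\eps^2\log n$, giving $\P(\tau\le N)\le n^{-2\eps^2}$.

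Finally, since $\tau$ is integer-valued, $\{\tau\le(\frac12-\eps)n\log n\}=\{\tau\le N\}$, which proves the claim.
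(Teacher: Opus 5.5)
Your proposal is correct and is essentially the paper's proof: you use the same choices $N=\lfloor(\frac12-\eps)n\log n\rfloor$, $a=n+1-n/\log n$, and $\theta=-\log(1-2\eps)$ in the tail estimate \eqref{eq:finite-progress-tail}, and you absorb the $O_\eps(\log\log n)$ error using the strict inequality $-\log(1-2\eps)-2\eps>2\eps^2$. Your explicit slack $\frac{8\eps^3}{3}$ and your remark that $\tau$ is integer-valued only spell out steps the paper leaves implicit.
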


\begin{proof}
Let
\[
N=\left\lfloor\left(\frac12-\eps\right)n\log n\right\rfloor,
\qquad
a=n+1-\frac{n}{\log n}.
\]
By \eqref{eq:finite-progress-tail} and \cref{lem:easy_properties}, \(Q_0=n+1-a\), so
for every \(\theta>0\),
\[
\P(\tau\le N)
\le
\exp\left(
-\theta\log\left(1+\frac{n+1-a}{2}\right)
+
\frac{2N}{a}(e^\theta-1)
\right).
\]
Set
\[
\theta=-\log(1-2\eps).
\]
Then
\[
\log\left(1+\frac{n+1-a}{2}\right)
=
\log\left(1+\frac{n}{2\log n}\right)
=
\log n - \log(2\log n) + o(1),
\]
while
\[
\frac{2N}{a}
=
(1-2\eps)\log n+O_\eps(1).
\]
Since \(e^\theta-1=2\eps/(1-2\eps)\), it follows that
\[
\P(\tau\le N)
\le
\exp\left(
\bigl(\log(1-2\eps)+2\eps\bigr)\log n
+
O_\eps(\log\log n)
\right).
\]
Finally,
\[
-\log(1-2\eps)-2\eps>2\eps^2
\qquad (0<\eps<1/2),
\]
so the last display is at most \(n^{-2\eps^2}\) for all sufficiently large \(n\).
\end{proof}

\section{Upper bound}\label{sec:upper}

In this section we prove the upper bound in \cref{thm:main}. The strategy is a block version of the coupon-collector strategy. 
Accordingly, fix an integer \(b\) with \(1\le b\le \sqrt n\), to be chosen at the end of the proof. Let
\[
m:=\floor{\frac nb}.
\]
Write \(n=mb+r\), where \(0\le r<b\). Since \(b\le \sqrt n\), we have \(r\le b-1\le m-1\). Thus we may choose block sizes
\[
s_1,\dots,s_m\in\{b,b+1\}
\]
so that \(\sum_{j=1}^m s_j=n\). 
Partition the coordinates \(\{1,\dots,n\}\) into consecutive blocks
\[
B_1,\dots,B_m,
\qquad
|B_j|=s_j,
\]
and partition \([0,1]\) into consecutive intervals
\[
J_1,\dots,J_m,
\qquad
|J_j|=\frac{s_j}{n}.
\]

The strategy $S_b$ is formally described in \cref{alg:block-strategy}; we present an informal description here. The strategy simultaneously plays the game in each block $B_j$ with entries in $J_j$.  Since the blocks $B_j$ and the intervals $J_j$ are ordered in the same way, local monotonicity within each block will ensure global monotonicity. For each block \(B_j\), the strategy maintains two pieces of local data. The first is an integer \(k_j\), the number of currently unfilled locations in \(B_j\). The second is a feasible interval
\[
I_j^{\rm feas}=[L_j,R_j]\subseteq J_j.
\]
Initially,
\[
k_j=s_j,
\qquad
I_j^{\rm feas}=J_j.
\]
The invariant is that the unfilled locations in \(B_j\) form a contiguous run, and future values placed into this run must lie in \(I_j^{\rm feas}\). Suppose a new sample $x\in J_j$ arrives while the current state of block $B_j$ is $(k,I)$ with $k\geq 1$ and $I=[L,R]$. Write $\ell:=R-L$. The strategy accepts $x$ if and only if it lies in one of the two edge intervals
\[
\Bigl[L,L+\frac{\ell}{k+1}\Bigr]
\qquad\text{or}\qquad
\Bigl[R-\frac{\ell}{k+1},R\Bigr].
\]
If $x$ lies in the left edge interval, it is placed in the leftmost currently empty location of $B_j$, and the new feasible interval becomes $[x,R]$. If $x$ lies in the right edge interval, it is placed in the rightmost currently empty location of $B_j$, and the new feasible interval becomes $[L,x]$. If $x$ lies outside these edge intervals, it is discarded. See \cref{fig:block-strategy} for an illustration.

\begin{figure}[ht]
\centering
\begin{tikzpicture}[
  x=1.05cm,
  y=1cm,
  cell/.style={draw, minimum width=1.05cm, minimum height=.75cm, anchor=south west, inner sep=0pt},
  filled/.style={cell, fill=gray!20},
  empty/.style={cell},
  lab/.style={font=\footnotesize},
  arr/.style={-{Latex[length=2mm]}, thick}
]

\node[lab, anchor=east] at (0.55,5.92) {block \(B_j\)};

\node[filled] at (1,5.50) {\(\le L\)};
\node[empty]  at (2.05,5.50) {\(\ast\)};
\node[empty]  at (3.10,5.50) {\(\ast\)};
\node[empty]  at (4.15,5.50) {\(\ast\)};
\node[empty]  at (5.20,5.50) {\(\ast\)};
\node[filled] at (6.25,5.50) {\(\ge R\)};

\draw[decorate, decoration={brace, amplitude=5pt, mirror}]
  (2.08,5.30) -- node[below=12pt, lab] {current run of \(k\) empty positions} (6.22,5.30);

\node[lab, anchor=east] at (0.55,1.05) {values $J_j$};

\draw[thick] (1,1.05) -- (7.30,1.05);
\draw (1,1.18) -- (1,0.92);
\draw (7.30,1.18) -- (7.30,0.92);

\draw[line width=1.2pt] (2.05,1.05) -- (6.25,1.05);
\draw (2.05,1.23) -- (2.05,0.87);
\draw (6.25,1.23) -- (6.25,0.87);

\node[above=10pt, lab] at (2.05,1.23) {\(L\)};
\node[above=10pt, lab] at (6.25,1.23) {\(R\)};
\node[above=30pt, lab] at (4.15,1.23) {\(I_j^{\rm feas}=[L,R]\), \(\ell=R-L\)};

\draw[line width=5pt, gray!50] (2.05,1.05) -- (2.90,1.05);
\draw[line width=5pt, gray!50] (5.40,1.05) -- (6.25,1.05);

\draw[<->] (2.05,0.42) -- node[below=3pt, lab] {\(\ell/(k+1)\)} (2.90,0.42);
\draw[<->] (5.40,0.42) -- node[below=3pt, lab] {\(\ell/(k+1)\)} (6.25,0.42);

\node[lab, above=5pt] at (2.475,1.05) {accept};
\node[lab, above=5pt] at (5.825,1.05) {accept};

\draw[arr] (2.475,1.65) to[out=100,in=-90] (2.575,5.42);
\draw[arr] (5.825,1.65) to[out=80,in=-90] (5.725,5.42);

\end{tikzpicture}

\caption{The local rule inside a block \(B_j\).  When \(k\) locations remain unfilled, the block has a feasible value interval \(I_j^{\rm feas}=[L,R]\) of length \(\ell=R-L\).  The strategy accepts a sample only if it lies in one of the two gray edge intervals, each of length \(\ell/(k+1)\).  A sample from the left edge interval fills the leftmost empty location and changes the feasible interval to \([x,R]\); a sample from the right edge interval fills the rightmost empty location and changes it to \([L,x]\).}
\label{fig:block-strategy}
\end{figure}

\begin{algorithm}[t]
\caption{The strategy \(S_b\)}
\label{alg:block-strategy}
\DontPrintSemicolon
Partition the coordinates into \(B_1,\dots,B_m\) and \([0,1]\) into \(J_1,\dots,J_m\) as above.\;
\ForEach{\(j\in\{1,\dots,m\}\)}{
Set \(k_j\gets s_j\).\;
Set \(I_j^{\rm feas}\gets J_j\).\;
}
\ForEach{new sample \(x\)}{
Let \(j\) be such that \(x\in J_j\), with endpoint ties broken arbitrarily.\;
\If{\(k_j=0\)}{
Discard \(x\).\;
}
\Else{
Write \(I_j^{\rm feas}=[L,R]\) and \(\ell=R-L\).\;
\If{\(x\in [L,L+\ell/(k_j+1)]\)}{
Place \(x\) in the leftmost currently empty location of \(B_j\).\;
Set \(I_j^{\rm feas}\gets [x,R]\) and \(k_j\gets k_j-1\).\;
}
\ElseIf{\(x\in [R-\ell/(k_j+1),R]\)}{
Place \(x\) in the rightmost currently empty location of \(B_j\).\;
Set \(I_j^{\rm feas}\gets [L,x]\) and \(k_j\gets k_j-1\).\;
}
\Else{
Discard \(x\).\;
}
}
}
\end{algorithm}

\begin{lemma}\label{lem:strategy-admissible}
The strategy \(S_b\) is well defined and preserves global monotonicity of the array.
\end{lemma}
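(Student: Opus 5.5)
The proof will be by induction on the number of samples processed, tracking a per-block invariant. For each block \(B_j\) in state \((k_j, I_j^{\rm feas}=[L_j,R_j])\), the invariant \((\mathcal I)\) says four things.

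1. The unfilled locations of \(B_j\) form a contiguous run of exactly \(k_j\) positions.
2. Every filled entry of \(B_j\) to the left of the run is \(\le L_j\), and every filled entry to the right of the run is \(\ge R_j\).
3. \(J_j\) has the form \([\alpha_j,\beta_j]\) and \(\alpha_j\le L_j\le R_j\le\beta_j\).
4. The filled entries of \(B_j\) are non-decreasing from left to right.

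Initially \(k_j=s_j\), the run is all of \(B_j\), and \(I_j^{\rm feas}=J_j\), so \((\mathcal I)\) holds trivially. A sample \(x\) affects only the single block \(j\) with \(x\in J_j\). The ``endpoint ties broken arbitrarily'' rule only matters on a null set, and it causes no harm anyway: we only ever use \(x\in J_j\) as a closed interval.

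For the inductive step, fix the block \(j\) receiving \(x\) with \(k_j\ge1\). First, well-definedness. Because \(k_j\ge1\), the run is nonempty, so ``the leftmost/rightmost currently empty location of \(B_j\)'' exists. The two acceptance intervals \([L,L+\ell/(k+1)]\) and \([R-\ell/(k+1),R]\) are subintervals of \([L,R]\), since \(\ell/(k+1)\le \ell\). They may overlap only if \(2/(k+1)\ge1\), i.e.\ \(k=1\); in that case the \texttt{if/elseif} order resolves the choice, and either placement fills the unique empty cell. Second, preservation of \((\mathcal I)\). In the left case \(x\in[L,R]\) is placed at the left end of the run, and the new feasible interval is \([x,R]\subseteq[L,R]\subseteq J_j\). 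Entries to its left are \(\le L\le x\), and entries to the right of the shrunken run are \(\ge R\ge x\). So (1)–(4) persist with \(k_j\) decreased by one, and the new \(L_j=x\) dominates every entry to its left, including \(x\) itself. The right case is symmetric. Discards change nothing.

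Finally, global monotonicity comes from combining the per-block statements. Take filled positions \(p<p'\). If both lie in the same block, (4) gives \(a_p\le a_{p'}\). If \(p\in B_j\) and \(p'\in B_{j'}\) with \(j<j'\), then every value ever placed in \(B_j\) lies in \(I_j^{\rm feas}\subseteq J_j\), and similarly for \(B_{j'}\). Since the intervals \(J_1,\dots,J_m\) are consecutive and ordered like the blocks, this gives \(a_p\le\sup J_j\le\inf J_{j'}\le a_{p'}\). The main point needing care, though it is not deep, is stating invariant (2) precisely: filled values left of the run are \(\le L_j\) and right of the run are \(\ge R_j\). This is exactly what guarantees that any value in \(I_j^{\rm feas}\) placed at either end of the run keeps the block sorted. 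Everything else is bookkeeping.
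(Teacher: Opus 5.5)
Your proof is correct and follows the paper's argument: you use the same per-block invariant (the empty cells form a contiguous run, filled entries to its left are $\le L$ and to its right are $\ge R$), preserve it by induction over left-edge and right-edge placements, and then use the left-to-right ordering of the $J_j$ to get global monotonicity. Your extra clauses (that $I_j^{\rm feas}\subseteq J_j$, that the $k=1$ overlap is resolved by the if/elseif order, and the explicit cross-block comparison $a_p\le\sup J_j\le\inf J_{j'}\le a_{p'}$) spell out steps the paper leaves implicit.
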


\begin{proof}
It suffices to check the following invariant inside each block \(B_j\). At every time, the unfilled locations of \(B_j\) form a contiguous run, and if \(I_j^{\rm feas}=[L,R]\), then all filled entries to the left of this run are at most \(L\), while all filled entries to the right are at least \(R\).

The invariant is trivial initially. Suppose it holds when \(k_j\ge1\) locations remain. If \(x\) is accepted from the left edge interval, then \(x\in[L,R]\), the strategy places \(x\) in the leftmost empty location, and updates the feasible interval to \([x,R]\). The invariant is preserved, since the old left entries are at most \(L\le x\), while the right entries are still at least \(R\). The right-edge case is identical, with the feasible interval updated to \([L,x]\). Discarded samples do not change the state.

Thus each block remains internally monotone. Since the intervals \(J_1,\dots,J_m\) and the coordinate blocks \(B_1,\dots,B_m\) are ordered from left to right, internal monotonicity in each block implies global monotonicity of the whole array.
\end{proof}

The next lemma is the key quantitative invariant used in the analysis. It says that each unfinished block always has an acceptance region of length at least \((2+o(1))/n\).

\begin{lemma}\label{lem:feasible-length}
Fix a block \(B_j\) of size \(s=s_j\). Suppose that \(k\ge1\) locations remain unfilled in this block, and write the current feasible interval as
\[
I_j^{\rm feas}=[L,R],
\qquad
\ell:=R-L.
\]
Then
\[
\ell\ge \frac{k+1}{s+1}\cdot \frac{s}{n}.
\]
Consequently, if
\[
E^-:=\left[L,L+\frac{\ell}{k+1}\right],
\qquad
E^+:=\left[R-\frac{\ell}{k+1},R\right]
\]
are the two edge intervals from which the strategy accepts samples, then
\[
|E^-|+|E^+|
=
\frac{2\ell}{k+1}
\ge
\frac{2s}{(s+1)n}
\ge
\frac{2b}{(b+1)n}.
\]
\end{lemma}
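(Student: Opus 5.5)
Since the feasible interval only changes at accepted samples, I will prove the bound by induction on the number of accepted samples in \(B_j\), i.e.\ downward induction on \(k\) from \(k=s\) to \(k=1\). The claim to carry is the invariant
\[
\frac{\ell}{k+1}\ \ge\ \frac{1}{s+1}\cdot\frac{s}{n}.
\]
The base case \(k=s\) is immediate. There \(I_j^{\rm feas}=J_j\) and \(\ell=s/n\), so
\[
\ell=\frac{s+1}{s+1}\cdot\frac sn
\]
and the bound holds with equality.

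For the inductive step, suppose the state is \((k,[L,R])\) with \(k\ge2\) and the bound holds. An accepted sample is either in the left edge interval or in the right edge interval; by symmetry take the left one, so \(x\in[L,L+\ell/(k+1)]\). The new interval is \([x,R]\), with length
\[
\ell'=R-x\ \ge\ \ell-\frac{\ell}{k+1}=\frac{k\,\ell}{k+1}.
\]
The new count is \(k'=k-1\), so \(k'+1=k\). Using the inductive hypothesis,
\[
\ell'\ \ge\ \frac{k}{k+1}\cdot\frac{k+1}{s+1}\cdot\frac sn=\frac{k'+1}{s+1}\cdot\frac sn,
\]
which is the claim at \(k'\). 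The right-edge case is identical, since there the new length is \(x-L\ge \ell-\ell/(k+1)\). Discarded samples leave the state unchanged, so the invariant holds at all times while \(k\ge1\).

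The key point is that the edge width \(\ell/(k+1)\) is chosen exactly so that the ratio \(\ell/(k+1)\) never decreases. Indeed, after an acceptance,
\[
\frac{\ell'}{k'+1}\ \ge\ \frac{k\ell/(k+1)}{k}=\frac{\ell}{k+1}.
\]
So there is no real obstacle beyond checking this telescoping. The only subtlety is that the two edge intervals might overlap when \(k\) is small. This does not affect the argument: the strategy assigns each accepted sample to one side, and the length bound uses only the side it was assigned to. Overlap could matter only for the measure of the acceptance region, and the lemma states \(|E^-|+|E^+|\) as the sum of the lengths rather than the measure of the union.

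For the consequence, I multiply the invariant by \(2\) and use the definition of the edge intervals:
\[
|E^-|+|E^+|=\frac{2\ell}{k+1}\ \ge\ \frac{2s}{(s+1)n}.
\]
The final inequality \(\frac{2s}{(s+1)n}\ge \frac{2b}{(b+1)n}\) holds because \(s\mapsto s/(s+1)\) is increasing and \(s=s_j\in\{b,b+1\}\), so \(s\ge b\).
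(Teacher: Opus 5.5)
Your proof is correct and matches the paper's argument: downward induction on \(k\) from the base case \(\ell=s/n\) at \(k=s\), the step \(\ell_{k-1}\ge \frac{k}{k+1}\ell_k\) coming from the edge-interval width \(\ell/(k+1)\), and monotonicity of \(s\mapsto s/(s+1)\) for the final inequality. Your concern about overlapping edge intervals is moot anyway, since \(2\ell/(k+1)\le \ell\) for \(k\ge1\), so the two intervals can share at most an endpoint (when \(k=1\)).
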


\begin{proof}
Let \(\ell_k\) denote the feasible-interval length when \(k\) locations remain in the block. We prove by downward induction on \(k\) that
\[
\ell_k\ge \frac{k+1}{s+1}\cdot \frac{s}{n}.
\]
For \(k=s\), the feasible interval is \(J_j\), whose length is \(s/n\), so the claim holds.

Now suppose the claim holds for some \(2\le k\le s\). If the next accepted sample in this block arrives while \(k\) locations remain, then it lies in one of the two edge intervals, each of length \(\ell_k/(k+1)\). After this placement, the new feasible interval therefore has length at least
\[
\ell_k-\frac{\ell_k}{k+1}
=
\frac{k}{k+1}\ell_k.
\]
By the induction hypothesis,
\[
\ell_{k-1}
\ge
\frac{k}{k+1}\ell_k
\ge
\frac{k}{s+1}\cdot \frac{s}{n},
\]
which is the desired bound with \(k-1\) in place of \(k\).

The total length of the two edge intervals when \(k\) locations remain is \(2\ell_k/(k+1)\). Using the first part gives
\[
\frac{2\ell_k}{k+1}
\ge
\frac{2s}{(s+1)n}.
\]
Since \(s\in\{b,b+1\}\), we have \(s/(s+1)\ge b/(b+1)\), proving the final inequality.
\end{proof}

We now estimate the completion time. The key point is that conditioned on the past, every unfinished block has probability at least \(2b/((b+1)n)\) of accepting the next sample.

\begin{proposition}\label{prop:upper-explicit}
For each \(n\), there is a deterministic strategy \(S=S^{(n)}\) such that
\[
\mb{E}[\tau_{S}]\le \left(\frac12+o(1)\right)n\log n.
\]
Moreover, for every fixed \(\eps>0\),
\[
\P\left(\tau_{S}>\left(\frac12+\eps\right)n\log n\right)\le n^{-\eps}
\]
for all sufficiently large \(n\).
\end{proposition}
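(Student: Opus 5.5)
We take \(S=S_b\) from \cref{alg:block-strategy} with \(b=\lceil\sqrt{\log n}\,\rceil\), so that \(b\to\infty\) and \(b=o(\log n/\log\log n)\). By \cref{lem:strategy-admissible} the strategy is legal.

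The plan is to dominate each block's completion by a geometric-waiting process. Fix a block \(B_j\) of size \(s\le b+1\), and set \(p:=2b/((b+1)n)\). By \cref{lem:feasible-length}, whenever the block is unfinished, the conditional probability given \(\mc F_t\) that the next sample is accepted into \(B_j\) is at least \(p\), regardless of history. A standard coupling then applies. Use an auxiliary uniform variable to thin the acceptance event down to probability exactly \(p\), which is possible because the acceptance region is a fixed interval given the past. Under this coupling the completion time \(\tau_j\) of block \(j\) is stochastically dominated by a sum of \(s\) independent \(\mathrm{Geom}(p)\) variables. Equivalently, \(\P(\tau_j>N)\le\P(\mathrm{Bin}(N,p)<s)\).

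The key estimate is the tail bound \(\P(\mathrm{Bin}(N,p)\le b)\le \binom{N}{b}(1-p)^{N-b}\cdot(b+1)\). More crudely, \((b+1)(Np)^{b}e^{-p(N-b)}\) when \(Np\ge 2b\). Take \(N=(\tfrac12+\eps)n\log n\). Then \(Np=(1+2\eps)(1-O(1/b))\log n\), so \(e^{-pN}=n^{-(1+2\eps)+o(1)}\), since the \(O(1/b)\) loss multiplies \(\log n\) to give \(O(\sqrt{\log n})=o(\log n)\). The polynomial factor \((Np)^b=\exp(b\log\log n\,(1+o(1)))=n^{o(1)}\) by the choice of \(b\). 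Hence \(\P(\tau_j>N)\le n^{-1-2\eps+o(1)}\). A union bound over the \(m\le n/b\) blocks gives \(\P(\tau>N)\le n^{-2\eps+o(1)}\le n^{-\eps}\) for large \(n\).

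The expectation bound comes from integrating the same tail. For \(N=(\tfrac12+\delta)n\log n\) with \(\delta\) ranging above any fixed \(\eps\), the bound \(m\,(b+1)(Np)^be^{-p(N-b)}\) decays exponentially in \(N/n\). Summing \(\P(\tau>N)\) over \(N\ge(\tfrac12+\eps)n\log n\) therefore contributes \(O(n)\). This gives \(\E\tau\le(\tfrac12+\eps)n\log n+O(n)\) for every fixed \(\eps\), hence \(\E[\tau_S]\le(\tfrac12+o(1))n\log n\). Alternatively, a slowly vanishing \(\eps_n\) works, since the tail bound holds uniformly as long as \(\eps_n\log n\gg b\log\log n\).

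The main obstacle is bookkeeping the balance in the choice of \(b\). The factor \(b/(b+1)\) in \(p\) costs \(\log n/b\) in the exponent, while the \(s\le b+1\) required successes cost roughly \(b\log\log n\). Both must be \(o(\log n)\), and \(b\asymp\sqrt{\log n}\) achieves this. The coupling itself is routine given the uniform-in-history lower bound of \cref{lem:feasible-length}.
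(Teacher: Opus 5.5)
Your argument is correct. It shares the paper's skeleton: the same strategy $S_b$ with $b=\lceil\sqrt{\log n}\,\rceil$, the uniform-in-history acceptance probability $2b/((b+1)n)$ from \cref{lem:feasible-length}, and a union over the $m$ blocks. Where you differ is in how the estimates are extracted.

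The paper never builds a coupling. It uses only that each inter-acceptance waiting time $W_r^{(j)}$ is conditionally dominated by a geometric variable. It iterates the conditional moment generating function to get $\E e^{\theta\tau_j}\le \eta^{-(b+1)}$ for the tuned choice $\theta=-\log(1-(1-\eta)\lambda)$, $\eta=1/\log n$. The expectation bound then follows in one line from $\max_j x_j\le \theta^{-1}\log\sum_j e^{\theta x_j}$ and Jensen, and the tail bound from Markov's inequality applied to the same exponential moment.

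You instead thin to i.i.d.\ Bernoulli$(p)$ trials and reduce to the binomial lower tail $\P(\mathrm{Bin}(N,p)\le b)$. You bound that tail pointwise, then recover the expectation by summing it beyond $(\tfrac12+\eps)n\log n$. That summation is fine: consecutive terms of your bound have ratio at most $e^{b/N-p}\le e^{-p/2}$, so the sum is $n^{1-2\eps+o(1)}=o(n)$.

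Your route is more elementary and shows the two error sources separately: the $b/(b+1)$ loss in the exponent and the $(Np)^b$ prefactor. The paper's route avoids the explicit coupling and the tail summation. It also yields a closed-form bound with explicit lower-order terms, which is what gives \cref{rem:fixed-b} directly.

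One slip to fix: your first displayed estimate $\binom{N}{b}(1-p)^{N-b}(b+1)$ is missing the factor $p^b$. As written it is true but useless, since $\binom{N}{b}=\exp(\Theta(\log^{3/2}n))$ swamps $n^{-1-2\eps}$. The bound you actually use, $(b+1)(Np)^b e^{-p(N-b)}$, is correct. It follows from $\binom{N}{k}p^k\le (Np)^k/k!$, $(1-p)^{N-k}\le e^{-p(N-k)}$ and $Np\ge 1$, so nothing downstream is affected.
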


\begin{proof}
Set
\[
b:=\left\lceil \sqrt{\log n}\right\rceil
\]
and define \(S^{(n)}:=S_b\). Let \(\tau_b\) denote the completion time of this strategy, and set
\[
\lambda:=\frac{2b}{(b+1)n}.
\]
By \cref{lem:feasible-length}, whenever a block is unfinished, the current acceptance region for that block has length at least \(\lambda\).

Fix a block \(B_j\), and let \(\tau_j\) be the time at which \(B_j\) is completed. If \(s_j=|B_j|\), write
\[
\tau_j=W_1^{(j)}+\cdots+W_{s_j}^{(j)},
\]
where \(W_r^{(j)}\) is the number of samples between the \((r-1)\)-st and \(r\)-th accepted samples in \(B_j\). Condition on the history up to the time at which \(r-1\) samples have been accepted in \(B_j\). Until the next acceptance in \(B_j\), every new sample has conditional probability at least \(\lambda\) of being accepted by this block. Therefore \(W_r^{(j)}\) is conditionally dominated by a geometric random variable \(G_\lambda\) with
\[
\P(G_\lambda=q)=(1-\lambda)^{q-1}\lambda,
\qquad q\ge1.
\]
Consequently, for every \(0<\theta<-\log(1-\lambda)\),
\[
\E\left[e^{\theta W_r^{(j)}}\mid \mc F_r \right]
\le
\E e^{\theta G_\lambda}
=
\frac{\lambda e^\theta}{1-(1-\lambda)e^\theta},
\]
where~$\mc F_r$ is the natural filtration on all past events.
Iterating over \(r=1,\dots,s_j\), and using \(s_j\le b+1\), gives
\begin{equation}\label{eq:block-mgf-discrete}
\E e^{\theta\tau_j}
\le
\left(\frac{\lambda e^\theta}{1-(1-\lambda)e^\theta}\right)^{b+1}.
\end{equation}

Now choose
\[
\eta:=\frac1{\log n},
\qquad
\theta:=-\log\bigl(1-(1-\eta)\lambda\bigr).
\]
Then
\[
e^\theta=\frac{1}{1-(1-\eta)\lambda},
\]
and hence
\[
\frac{\lambda e^\theta}{1-(1-\lambda)e^\theta}=\frac1\eta.
\]
Thus \eqref{eq:block-mgf-discrete} gives
\begin{equation}\label{eq:block-mgf-specialized}
\E e^{\theta\tau_j}\le \eta^{-(b+1)}
\end{equation}
for every block \(B_j\).

Since the array is complete exactly when all blocks are complete,
\[
\tau_b=\max_{1\le j\le m}\tau_j.
\]
Using
\[
\max_{1\le j\le m} x_j
\le
\frac1\theta\log\left(\sum_{j=1}^m e^{\theta x_j}\right),
\qquad x_1,\dots,x_m\in\R,
\]
together with Jensen's inequality and \eqref{eq:block-mgf-specialized}, we obtain
\[
\E[\tau_b]
\le
\frac1\theta
\log\left(\sum_{j=1}^m \E e^{\theta\tau_j}\right)
\le
\frac1\theta
\left(
\log m+(b+1)\log\frac1\eta
\right).
\]
Since
\[
\theta
=
-\log\bigl(1-(1-\eta)\lambda\bigr)
\ge
(1-\eta)\lambda
=
(1-\eta)\frac{2b}{(b+1)n},
\]
it follows that
\[
\E[\tau_b]
\le
\frac{(b+1)n}{2b(1-\eta)}
\left(
\log m+(b+1)\log\frac1\eta
\right).
\]
For our choices \(b=\lceil\sqrt{\log n}\rceil\) and \(\eta=1/\log n\), we have
\[
\frac{b+1}{b}=1+o(1),
\qquad
\frac1{1-\eta}=1+o(1),
\qquad
\log m=\log n+o(\log n),
\]
and
\[
(b+1)\log\frac1\eta
=
O(\sqrt{\log n}\log\log n)
=
o(\log n).
\]
Therefore
\[
\E[\tau_b]
\le
\left(\frac12+o(1)\right)n\log n.
\]

It remains to prove the high-probability bound. By Markov's inequality and \eqref{eq:block-mgf-specialized}, for every \(t\ge0\),
\begin{equation}\label{eq:block-tail-specialized}
\P(\tau_b>t)
\le
e^{-\theta t}\sum_{j=1}^m \E e^{\theta\tau_j}
\le
\exp\left(
\log m+(b+1)\log\frac1\eta-\theta t
\right).
\end{equation}
Let
\[
t_n:=\left(\frac12+\eps\right)n\log n.
\]
Using \(\theta\ge (1-\eta)\lambda\), \eqref{eq:block-tail-specialized} gives
\begin{align*}
\log \P(\tau_b>t_n)
&\le
\log m+(b+1)\log\frac1\eta
-
(1-\eta)\frac{2b}{(b+1)n}
\left(\frac12+\eps\right)n\log n  \\
&\le
\log n+o(\log n)
-
(1+2\eps)(1-o(1))\log n  \\
&=
(-2\eps+o(1))\log n.
\end{align*}
For all sufficiently large \(n\), this is at most \(-\eps\log n\). Hence
\[
\P\left(\tau_b>\left(\frac12+\eps\right)n\log n\right)\le n^{-\eps},
\]
as claimed.
\end{proof}

\begin{remark}\label{rem:fixed-b}
For a given choice of \(b\), the above proof gives
\[
\mb{E}[\tau_{S_b}(n)]
\le
\frac{b+1}{2b}n\log n+O_b(n\log\log n).
\]
Thus \(b=1\) gives the coupon-collector constant \(1\), \(b=2\) gives the constant \(3/4\), and increasing the block size drives the leading constant down to \(1/2\).
\end{remark}

\section{With-replacement upper bound}
\label{sec:upper-bound-with-replacement}
In this section we prove \cref{thm:replacement-upper}. The array again begins empty and must remain monotone at all times, but the player may overwrite old entries.  After seeing \(X_t\), a strategy may either discard \(X_t\) or place \(X_t\) in some coordinate \(j\) so that
\[
(a_1,\ldots,a_{j-1},X_t,a_{j+1},\ldots,a_n)
\]
is a valid partial array state, regardless of whether \(a_j=\ast\).

The strategy is again block-based, but the local rule is different from the without-replacement rule in \cref{sec:upper}. The advantage of allowing replacements is that a block can continually improve its current configuration. Unlike the without-replacement strategy, progress does not depend on waiting for samples to fall into shrinking acceptance intervals.

Fix an integer parameter \(r\ge2\), to be chosen at the end of the proof.  We assume throughout that \(r\le \sqrt n\).  Let
\[
m:=\floor{\frac nr}.
\]
Write \(n=mr+q\), where \(0\le q<r\).  Since \(r\le \sqrt n\), we have \(q\le r-1\le m-1\).  Thus we may choose block sizes
\[
s_1,\dots,s_m\in\{r,r+1\}
\]
so that \(\sum_{j=1}^m s_j=n\).  Partition the coordinates \(\{1,\dots,n\}\) into consecutive blocks
\[
B_1,\dots,B_m,
\qquad
|B_j|=s_j,
\]
and partition \([0,1]\) into consecutive intervals
\[
I_1,\dots,I_m,
\qquad
|I_j|=\frac{s_j}{n}.
\]

Since the samples have continuous distributions, with probability one no two samples
falling in the same interval \(I_j\) are equal. Throughout the analysis we work on
this probability-one event. On the null event of a tie, the strategy may use any fixed
tie-breaking rule which preserves monotonicity; this choice has no effect on the
estimates below. The with-replacement strategy $S'_r$ is formally described in \cref{alg:with-replacement-block}; we present an informal description here. As before, the strategy only places values from \(I_j\) into the coordinate block \(B_j\).  Since the coordinate blocks and value intervals are ordered in the same way, global monotonicity follows once each block is kept internally sorted. The local update rule is the following. In a block \(B_j\), suppose the currently occupied entries, listed from left to right, are
\[
a_1<\cdots<a_s,
\qquad 0\le s\le s_j.
\]
If a new value \(x\in I_j\) arrives, the strategy replaces the leftmost occupied entry which is larger than \(x\), if such an entry exists.  If no occupied entry is larger than \(x\), then the strategy appends \(x\) in the next empty slot if one exists, and otherwise discards \(x\). This is exactly patience sorting, truncated at $s_j$ slots. See \cref{fig:replacement-patience} for an illustration. 

\begin{figure}[ht]
\centering
\begin{tikzpicture}[
  x=1.05cm,
  y=1cm,
  cell/.style={draw, minimum width=1.05cm, minimum height=.75cm, anchor=south west, inner sep=0pt},
  filled/.style={cell, fill=gray!20},
  empty/.style={cell},
  lab/.style={font=\footnotesize},
  arr/.style={-{Latex[length=2mm]}, thick}
]

\node[lab, anchor=east] at (0.55,4.90) {before};

\node[filled] at (1,4.50) {\(a_1\)};
\node[filled] at (2.05,4.50) {\(a_2\)};
\node[filled] at (3.10,4.50) {\(a_3\)};
\node[empty]  at (4.15,4.50) {\(\ast\)};
\node[empty]  at (5.20,4.50) {\(\ast\)};

\node[lab] at (2.575,5.45) {\(a_1<x<a_2\)};

\node[lab, anchor=east] at (0.55,2.85) {after};

\node[filled] at (1,2.5) {\(a_1\)};
\node[filled] at (2.05,2.5) {\(x\)};
\node[filled] at (3.10,2.5) {\(a_3\)};
\node[empty]  at (4.15,2.5) {\(\ast\)};
\node[empty]  at (5.20,2.5) {\(\ast\)};

\node[lab, align=left, anchor=west] at (6.55,4.90)
{
\(a_2\) is the leftmost\\
stored value larger than \(x\).
};

\node[lab, align=left, anchor=west] at (6.55,2.85)
{
Replace \(a_2\) by \(x\).\\
The block remains sorted.
};

\end{tikzpicture}

\caption{The local with-replacement rule inside a block.  If the incoming value \(x\) is smaller than some occupied entry, the strategy replaces the leftmost occupied entry larger than \(x\).  In the displayed case \(a_1<x<a_2\), so \(a_2\) is replaced by \(x\).  If no occupied entry is larger than \(x\), then \(x\) is appended in the next empty slot, if one exists.}
\label{fig:replacement-patience}
\end{figure}

\begin{lemma}\label{lem:replacement-admissible}
The strategy $S'_r$ is well defined and preserves global monotonicity of the array.
\end{lemma}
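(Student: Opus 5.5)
The plan is to mirror the proof of \cref{lem:strategy-admissible}. First I reduce global monotonicity to an invariant inside each block. Then I check that the patience-sorting update preserves that invariant.

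The invariant for block \(B_j\) is the following. At every time, the occupied coordinates of \(B_j\) form an initial segment of \(B_j\) (a prefix of length \(s\), with \(0\le s\le s_j\)). The values stored there lie in \(I_j\) and are strictly increasing, \(a_1<\cdots<a_s\). We work on the probability-one event that no two samples in the same \(I_j\) coincide, as stipulated. The invariant holds trivially at time \(0\), since every block is empty.

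For the inductive step, suppose the invariant holds and a sample \(x\) arrives. Let \(j\) be the index with \(x\in I_j\); only block \(B_j\) can change. There are three cases.

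\emph{Replacement.} If some stored value exceeds \(x\), let \(i\) be the least index with \(a_i>x\). Then \(a_{i-1}<x\) when \(i\ge2\), because \(a_{i-1}\le x\) by minimality and \(a_{i-1}\ne x\) on our event. Replacing \(a_i\) by \(x\) therefore gives \(a_{i-1}<x<a_i<a_{i+1}\). The prefix stays sorted and its occupied positions are unchanged.

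\emph{Appending.} If no stored value exceeds \(x\) and \(s<s_j\), then \(x>a_s\). Writing \(x\) into position \(s+1\) extends the sorted prefix by one.

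\emph{Discarding.} If no stored value exceeds \(x\) and \(s=s_j\), the sample is discarded and nothing changes.

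In every case the rule picks a unique coordinate, so the strategy is well defined. All stored values remain in \(I_j\).

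Finally, global monotonicity follows as before. Take occupied coordinates \(p<p'\). If both lie in the same block \(B_j\), the invariant gives \(a_p\le a_{p'}\). If \(p\in B_j\) and \(p'\in B_{j'}\) with \(j<j'\), then \(a_p\in I_j\) and \(a_{p'}\in I_{j'}\). Since \(I_j\) lies to the left of \(I_{j'}\) (sharing at most an endpoint), we again get \(a_p\le a_{p'}\).

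The only point needing care is the replacement step. One might worry that overwriting an entry could break monotonicity with the entry to its left. It does not, because \(i\) is chosen minimal and ties are excluded. The endpoint ties between adjacent intervals \(I_j\) are measure-zero and are handled by the fixed tie-breaking convention.
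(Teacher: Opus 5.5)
Your proof is correct and follows essentially the same route as the paper. Both argue by induction that each block's occupied entries form a strictly increasing initial segment, with replace/append/discard cases handled via minimality of $i$ and the no-ties event, and then deduce global monotonicity from the left-to-right ordering of the blocks and the intervals $I_j$. The paper leaves implicit that stored values stay in $I_j$, and your tracking of this explicitly is a small but welcome addition.
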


\begin{proof}
It suffices to check monotonicity inside each block.  Fix a block \(B_j\).  We prove by induction on the number of samples in $I_j$ that the occupied
entries of \(B_j\) form an increasing initial segment of the block.

The claim is trivial initially.  Suppose it holds before a sample \(x\in I_j\) is processed.  If some occupied entry is larger than \(x\), let \(a_i\) be the leftmost such entry.  Then \(a_{i-1}<x<a_i\), with the natural interpretation when \(i=1\).  Since \(a_i<a_{i+1}\) when \(i<s\), replacing \(a_i\) by \(x\) preserves the increasing order.  If no occupied entry is larger than \(x\), then \(x>a_s\), again with the obvious interpretation when \(s=0\).  Thus appending \(x\) in the next empty slot also preserves the increasing order.  Discarded samples do not change the state.

Hence each block remains internally sorted.  Since the intervals \(I_1,\dots,I_m\) and coordinate blocks \(B_1,\dots,B_m\) are ordered from left to right, internal monotonicity in each block implies global monotonicity of the full array.
\end{proof}

We next recall the standard invariant behind patience sorting. Given a sequence of real numbers \(x_1,\dots,x_t\), let \(\LIS(x_1,\dots,x_t)\) denote the length of the longest increasing subsequence.

\begin{lemma}\label{lem:truncated-patience}
Fix \(s\ge1\), and let \(x_1,\dots,x_t\) be distinct real numbers. Run the local rule above on a block with  capacity \(s\), starting from the empty configuration. Then the number of occupied slots after processing \(x_1,\dots,x_t\) is
\[
\min\{s,\LIS(x_1,\dots,x_t)\}.
\]
\end{lemma}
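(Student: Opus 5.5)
We compare the truncated process with ordinary (untruncated) patience sorting on the same sequence. Let $P_t=(p_1<\cdots<p_{\lambda_t})$ be the pile tops after $t$ steps of the untruncated rule. Here a new value replaces the leftmost top larger than it, or is appended if none exists. The classical invariant is that $\lambda_t=\LIS(x_1,\dots,x_t)$. Let $A_t=(a_1<\cdots<a_{\sigma_t})$ be the configuration of the truncated block. The claim to prove by induction on $t$ is that $\sigma_t=\min\{s,\lambda_t\}$ and that $A_t$ equals the first $\sigma_t$ entries of $P_t$.

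First, the untruncated invariant $\lambda_t=\LIS$ needs to be recorded, since it is only called "standard" in the text. Define $h(i)$ as the length of the longest increasing subsequence ending at $x_i$. One checks by induction that $p_k$ is the minimum of $x_i$ over $i\le t$ with $h(i)=k$. When $x_{t+1}$ arrives, $h(t+1)$ is one more than the number of tops less than $x_{t+1}$. This equals the index at which the rule places $x_{t+1}$, and it only lowers the minimum at that level. Hence $\lambda_t=\max_i h(i)=\LIS$.

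Next, the truncation step. Assume that $A_t$ is the prefix of length $\sigma_t$ of $P_t$. A new value $x$ is placed by the untruncated rule at index $k$, the number of tops less than $x$ plus one. There are three cases.
\begin{enumerate}
\item If $k\le\sigma_t$, then $a_k=p_k$ is the leftmost entry of $A_t$ larger than $x$, since $A_t$ is a prefix of $P_t$. Both processes replace position $k$ by $x$, the lengths are unchanged, and the prefix property persists.
\item If $k=\sigma_t+1$ and $\sigma_t<s$, then $\sigma_t=\lambda_t$ by the inductive hypothesis, so $A_t=P_t$. No entry exceeds $x$, and both processes append $x$.
\item If $k>\sigma_t$ and $\sigma_t=s$, the truncated block is full and no entry exceeds $x$, so $x$ is discarded. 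Meanwhile $P$ changes only at index $k>s$, so the first $s$ entries are unchanged, and $\lambda$ does not decrease.
\end{enumerate}
The case $k>\sigma_t+1$ with $\sigma_t<s$ cannot occur, because $\sigma_t<s$ forces $\sigma_t=\lambda_t$ and $k\le\lambda_t+1$. In every case $\sigma_{t+1}=\min\{s,\lambda_{t+1}\}$.

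The main obstacle is bookkeeping rather than an idea. One must state the prefix invariant, not just the length identity, and handle the boundary case where appending beyond $s$ is suppressed. The prefix property is exactly what makes the two rules agree on replacements.
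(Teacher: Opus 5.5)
Your proof is correct, but it is organized differently from the paper's.

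\textbf{The paper's route.} The paper never introduces the untruncated piles. It proves directly, by a single induction on the number of processed values, the following invariant: for every $\ell\le s$, the $\ell$-th occupied value of the truncated block (set to $+\infty$ if that slot is empty) equals the smallest possible last value of an increasing subsequence of length $\ell$ among the values seen so far. A new value $x$ changes this minimal last value only at the unique level $\ell$ with $a_{\ell-1}<x<a_\ell$. Since the update at level $\ell$ depends only on levels $\ell-1$ and $\ell$, restricting attention to $\ell\le s$ costs nothing. The ``discard when full'' case is simply the case where that level exceeds $s$, so no boundary analysis is needed.

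\textbf{Your route.} You first prove the classical untruncated invariant. Your quantity $p_k=\min\{x_i: h(i)=k\}$ is the same as the paper's minimal last value, because a minimizer of $x_i$ over $\{i : h(i)\ge k\}$ must have $h(i)=k$. You then transfer this to the truncated block via the prefix coupling and your three-case check.

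\textbf{Comparison.} Your argument is more modular. One could cite the standard patience-sorting theorem as a black box, and the truncation then becomes a pure simulation fact. It also makes explicit that the block always holds the first $\min\{s,\lambda_t\}$ pile tops of ordinary patience sorting. The paper's argument is shorter: it uses one induction instead of two and needs no case split at the capacity boundary.
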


\begin{proof}
For \(1\le \ell\le s\), let \(a_\ell^{(u)}\) denote the \(\ell\)-th occupied value after processing \(x_1,\dots,x_u\), with the convention that \(a_\ell^{(u)}=+\infty\) if fewer than \(\ell\) slots are occupied. We prove by induction on \(u\) that \(a_\ell^{(u)}\) is the smallest possible last value of an increasing subsequence of length \(\ell\) in \(x_1,\dots,x_u\), with the convention that this value is \(+\infty\) if no such subsequence exists.

The claim is trivial for \(u=0\). Suppose it holds after \(u\) steps, and let \(x=x_{u+1}\). Let \(j\) be the smallest index such that
\[
a_j^{(u)}>x,
\]
setting \(j=s+1\) if no such index exists. The local rule replaces \(a_j^{(u)}\) by \(x\) if \(j\le s\), and makes no change if \(j=s+1\).

We check that this is exactly the update rule for the smallest possible last values of increasing subsequences. Fix \(1\le \ell\le s\). An increasing subsequence of length \(\ell\) in \(x_1,\dots,x_u,x\) either avoids \(x\), in which case its last value is at least \(a_\ell^{(u)}\), or uses \(x\) as its last value. The latter is possible if and only if there is an increasing subsequence of length \(\ell-1\) among \(x_1,\dots,x_u\) whose last value is smaller than \(x\). By the induction hypothesis, this is equivalent to
\[
a_{\ell-1}^{(u)}<x,
\]
where we set \(a_0^{(u)}=-\infty\).

Thus the smallest possible last value for a length-\(\ell\) increasing subsequence changes exactly when
\[
a_{\ell-1}^{(u)}<x<a_\ell^{(u)}.
\]
By the definition of \(j\), this happens precisely for \(\ell=j\), if \(j\le s\). In that case the new smallest possible last value is \(x\). For every other \(\ell\), the value remains unchanged. This agrees with the local update rule, so the invariant is preserved.

Finally, after processing \(x_1,\dots,x_t\), the occupied slots are precisely those \(\ell\le s\) for which \(a_\ell^{(t)}<+\infty\). By the invariant, these are precisely the values of \(\ell\le s\) for which there exists an increasing subsequence of length \(\ell\). Hence the number of occupied slots is
\[
\min\{s,\LIS(x_1,\dots,x_t)\}. \qedhere
\]
\end{proof}

\begin{algorithm}[t]
\caption{The with-replacement strategy \(S'_r\)}
\label{alg:with-replacement-block}
\DontPrintSemicolon
Partition the coordinates into \(B_1,\dots,B_m\) and \([0,1]\) into \(I_1,\dots,I_m\) as above.\;
\ForEach{\(j\in\{1,\dots,m\}\)}{
Set \(h_j\gets 0\).\;
}
\ForEach{new sample \(x\)}{
Let \(j\) be such that \(x\in I_j\), with endpoint ties broken arbitrarily.\;
Let \(a_1<\cdots<a_{h_j}\) be the currently occupied entries in \(B_j\), listed from left to right.\;
\If{there is an index \(i\le h_j\) with \(a_i>x\)}{
Let \(i\) be the smallest such index.\;
Replace \(a_i\) by \(x\).\;
}
\ElseIf{\(h_j<s_j\)}{
Place \(x\) in the next empty location of \(B_j\).\;
Set \(h_j\gets h_j+1\).\;
}
\Else{
Discard \(x\).\;
}
}
\end{algorithm}

\medskip

We proceed with the proof of \cref{thm:replacement-upper}. The proof is based on a one-block estimate and a restart argument.  We first record a crude counting bound for permutations with small longest increasing subsequence.

\begin{lemma}\label{lem:counting}
Let \(K\ge1\), \(s\ge2\), and let \(\pi\) be uniformly distributed on \(S_K\). Then
\[
\P(\LIS(\pi)<s)\le \frac{(s-1)^{2K}}{K!}.
\]
\end{lemma}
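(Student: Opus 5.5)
The plan is to count permutations with no increasing subsequence of length \(s\) and divide by \(K!\). The tool is the classical Erd\H{o}s--Szekeres/Greene decomposition: if \(\LIS(\pi)\le s-1\), then \(\pi\) can be partitioned into at most \(s-1\) decreasing subsequences. Concretely, I will use the patience-sorting piles from \cref{lem:truncated-patience}, run with unbounded capacity. The pile index of each element equals the length of the longest increasing subsequence ending at it, so it takes values in \(\{1,\dots,s-1\}\). Moreover, elements sharing the same pile index must appear in decreasing order, since if \(i<j\) and \(\pi(i)<\pi(j)\), the index at \(j\) would be strictly larger than the index at \(i\).

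Next I encode \(\pi\) injectively. To each such \(\pi\), associate the pair of labelings
\[
f:\{1,\dots,K\}\to\{1,\dots,s-1\},\qquad g:\{1,\dots,K\}\to\{1,\dots,s-1\},
\]
where \(f(i)\) is the label of position \(i\) and \(g(v)\) is the label of the value \(v=\pi(i)\), namely \(g(\pi(i))=f(i)\). The claim is that \(\pi\) is recovered from \((f,g)\). Within label class \(c\), the positions \(f^{-1}(c)\) and the values \(g^{-1}(c)\) have the same cardinality. Because entries in a class are decreasing, the \(r\)-th smallest position in the class must receive the \(r\)-th largest value in the class. Hence \(\pi\) is determined by \((f,g)\), and the number of permutations with \(\LIS(\pi)<s\) is at most \((s-1)^K\cdot(s-1)^K=(s-1)^{2K}\). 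Dividing by \(|S_K|=K!\) gives the bound. The only step needing care is that labels defined by position are compatible with labels defined by value; defining \(g\) via \(\pi\) handles this automatically.

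I expect no serious obstacle. The only point to verify carefully is the decreasing property within a class, which follows from the definition of the label as the length of the longest increasing subsequence ending at that entry. An alternative route is the RSK bound \(\sum_{\lambda_1<s}(f^\lambda)^2\) together with the estimate that the number of standard Young tableaux with at most \(s-1\) columns is at most \((s-1)^K\). The elementary encoding above avoids RSK entirely, so I would present that.
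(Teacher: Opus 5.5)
Your proposal is correct and is essentially the paper's proof: the paper also labels each position by the length of the longest increasing subsequence ending there, observes that equal labels force decreasing entries, and encodes $\pi$ injectively by the position-labeling and the value-labeling, giving at most $(s-1)^{2K}$ permutations with $\LIS(\pi)<s$. The patience-sorting framing is an unnecessary but harmless detour, since the label you actually use is the same one.
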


\begin{proof}
We bound the number of permutations with no increasing subsequence of length \(s\). Fix such a permutation \(\pi=\pi_1\cdots\pi_K\). For each position \(t\), let
\[
c(t):=\text{the length of the longest increasing subsequence of \(\pi\) ending at position \(t\)}.
\]
Then \(c(t)\in\{1,\dots,s-1\}\) for every \(t\).

If \(u<v\) and \(c(u)=c(v)\), then \(\pi_u>\pi_v\). Indeed, if \(\pi_u<\pi_v\), then an increasing subsequence ending at \(u\) could be extended by \(\pi_v\), contradicting \(c(u)=c(v)\). Hence, among positions with the same value of \(c(\cdot)\), the entries of \(\pi\) appear in decreasing order.

Now record the following two pieces of data:
\[
t\mapsto c(t),
\qquad
v\mapsto c(\pi^{-1}(v)).
\]
There are at most \((s-1)^K\) choices for the first assignment and at most \((s-1)^K\) choices for the second. Once these two assignments are fixed, there is at most one compatible permutation: for each \(q\in\{1,\dots,s-1\}\), the values assigned color \(q\) must be placed in the positions assigned color \(q\), in decreasing order.

Thus the number of permutations \(\pi\in S_K\) with \(\LIS(\pi)<s\) is at most \((s-1)^{2K}\). The probability bound follows by dividing by \(K!\).
\end{proof}

The next lemma translates the patience-sorting invariant into a one-block failure estimate.

\begin{lemma}\label{lem:one-block-discrete}
Fix a block \(B_j\), and suppose that it starts from the empty state. Let \(N_j(T)\) be the number of samples among \(X_1,\dots,X_T\) which fall in \(I_j\). Then, for every \(K\ge1\),
\[
\P(\text{\(B_j\) is unfinished after \(T\) samples})
\le
\P(N_j(T)<K)+\frac{(s_j-1)^{2K}}{K!}.
\]
\end{lemma}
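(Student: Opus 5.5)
The plan is to split on whether at least $K$ samples have landed in $I_j$. Decompose the failure event as
\[
\{\text{$B_j$ unfinished after $T$ samples}\}\subseteq\{N_j(T)<K\}\cup\{N_j(T)\ge K,\ \text{$B_j$ unfinished}\},
\]
so it suffices to bound the second event by $(s_j-1)^{2K}/K!$.

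The block $B_j$ only ever sees the subsequence of samples falling in $I_j$, and the local rule is deterministic. Hence the occupancy of $B_j$ after $T$ global samples is a function of the first $N_j(T)$ local samples $Y_1,\dots,Y_{N_j(T)}$. By \cref{lem:truncated-patience}, the number of occupied slots equals $\min\{s_j,\LIS(Y_1,\dots,Y_{N_j(T)})\}$. This quantity is nondecreasing in the number of processed samples: LIS of a prefix is at most LIS of the longer sequence, and once the block is full it stays full. So on $\{N_j(T)\ge K\}$, if $B_j$ is unfinished then $\LIS(Y_1,\dots,Y_K)<s_j$. Therefore
\[
\P\bigl(N_j(T)\ge K,\ \text{$B_j$ unfinished}\bigr)\le \P\bigl(\LIS(Y_1,\dots,Y_K)<s_j\bigr),
\]
where $Y_1,Y_2,\dots$ is the (infinite) sequence of successive samples falling in $I_j$. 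Using the full infinite sequence here avoids conditioning on the random count $N_j(T)$.

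The remaining step is to identify the law of $(Y_1,\dots,Y_K)$. By the strong Markov property (i.i.d.\ samples, with $Y_i$ the value at the $i$-th hitting time of $I_j$), the $Y_i$ are i.i.d.\ uniform on $I_j$. This holds because the hitting times are stopping times and the value at a hitting time, conditioned on lying in $I_j$, is uniform on $I_j$ and independent of the past. Consequently their relative order is a uniform permutation of $S_K$ almost surely, and \cref{lem:counting} with $s=s_j\ge2$ gives the bound $(s_j-1)^{2K}/K!$.

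The only subtle point is this step: we must make sure we do not condition on $N_j(T)\ge K$ when applying the uniform-permutation fact. Working with the first $K$ local samples of the infinite sequence, as above, handles it.
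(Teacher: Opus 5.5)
Your proof is correct and is essentially the paper's argument: split off \(\{N_j(T)<K\}\), and use that occupancy is nondecreasing, so a block unfinished after \(T\) samples was already unfinished after the first \(K\) local samples; then combine \cref{lem:truncated-patience}, the uniformity of the relative order of i.i.d.\ continuous samples, and \cref{lem:counting}. The only difference is technical: you take the first \(K\) terms of the infinite sequence of local samples to avoid conditioning, whereas the paper conditions on \(\{N_j(T)\ge K\}\) directly, which is equally legitimate because that event depends only on which samples land in \(I_j\), not on their values.
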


\begin{proof}
If \(B_j\) is unfinished after \(T\) samples and \(N_j(T)\ge K\), then \(B_j\) was still unfinished after the first \(K\) samples falling in \(I_j\), since the number of occupied slots in a block is nondecreasing under the local rule.

The event \(\{N_j(T)\ge K\}\) depends only on which of the first \(T\) samples fall in \(I_j\). Conditioned on this event, the first \(K\) samples falling in \(I_j\) are independent and uniformly distributed on \(I_j\). Since these samples have a continuous distribution, their relative order is uniformly distributed on \(S_K\).

By \cref{lem:truncated-patience}, after these \(K\) samples the block is unfinished only if their longest increasing subsequence has length strictly less than \(s_j\). Applying \cref{lem:counting} gives the result.
\end{proof}

We next choose the length of one phase and bound the probability that a fixed block, started from the empty state, is not complete by the end of the phase.

\begin{lemma}\label{lem:one-phase-failure-discrete}
Let \(C\ge100\), and set
\[
K:=\lceil Cr^2\rceil,
\qquad
T_C:=\left\lceil \frac{2Kn}{r}\right\rceil.
\]
If a block \(B_j\) starts empty, then
\[
\P(\text{\(B_j\) is unfinished after \(T_C\) samples})
\le
2e^{-K/4}.
\]
\end{lemma}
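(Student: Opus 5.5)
We apply \cref{lem:one-block-discrete} with \(T=T_C\) and \(K=\lceil Cr^2\rceil\), so the failure probability splits into two parts: a binomial lower-tail term \(\P(N_j(T_C)<K)\) and a permutation term \((s_j-1)^{2K}/K!\). We will show each is at most \(e^{-K/4}\).

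\emph{Binomial term.} The count \(N_j(T_C)\) is distributed as \(\mathrm{Bin}(T_C,s_j/n)\). Since \(s_j\ge r\), its mean is
\[
\mu\ge T_C\,\frac{r}{n}\ge 2K.
\]
By the multiplicative Chernoff bound \(\P(N<(1-\delta)\mu)\le e^{-\delta^2\mu/2}\) with \(\delta=1/2\), we get
\[
\P(N_j(T_C)<K)\le \P(N_j(T_C)<\mu/2)\le e^{-\mu/8}\le e^{-K/4}.
\]
Alternatively, stochastic domination by \(\mathrm{Bin}(T_C,r/n)\) lets one apply Chernoff directly with mean at least \(2K\).

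\emph{Permutation term.} We have \(s_j-1\le r\), and \(K!\ge (K/e)^K\). Hence
\[
\frac{(s_j-1)^{2K}}{K!}\le \left(\frac{e r^2}{K}\right)^K\le \left(\frac{e}{C}\right)^K\le \left(\frac{e}{100}\right)^K,
\]
using \(K\ge Cr^2\). Since \(e/100<e^{-1/4}\approx 0.78\) (indeed \(e/100\approx 0.027\)), this term is also at most \(e^{-K/4}\).

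Adding the two bounds gives \(2e^{-K/4}\), as claimed. The only point requiring care is confirming that the Chernoff constant yields exactly exponent \(K/4\) from a mean of at least \(2K\); the factor \(2\) in \(T_C\) was chosen for precisely this purpose, so no real obstacle is expected.
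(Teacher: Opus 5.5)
Your proof is correct and follows the paper's argument: you apply \cref{lem:one-block-discrete} with \(T=T_C\), control the binomial term by the multiplicative Chernoff bound using mean at least \(2K\), and control the permutation term via \(K!\ge (K/e)^K\). The only cosmetic difference is that you bound \(s_j-1\le r\) directly to obtain \((e/C)^K\). The paper instead passes through \(s_j^2\le \frac94 r^2\) and \(K\ge 20s_j^2\) to get \((e/20)^K\le e^{-K}\).
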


\begin{proof}
Recall that \(s_j\in\{r,r+1\}\). Since \(|I_j|=s_j/n\),
\[
N_j(T_C)\sim \mathrm{Bin}\left(T_C,\frac{s_j}{n}\right).
\]
Its mean satisfies
\[
\mb{E}N_j(T_C)
=
T_C\frac{s_j}{n}
\ge
\frac{2Kn}{r}\cdot \frac{s_j}{n}
\ge
2K,
\]
because \(s_j\ge r\). Therefore, by the binomial Chernoff bound,
\[
\P(N_j(T_C)<K)
\le
\P\left(N_j(T_C)\le \frac12\mb{E}N_j(T_C)\right)
\le
e^{-\mb{E}N_j(T_C)/8}
\le
e^{-K/4}.
\]

It remains to bound the second term in \cref{lem:one-block-discrete}. Since \(s_j\le r+1\) and \(r\ge2\),
\[
s_j^2\le (r+1)^2\le \frac94r^2.
\]
As \(C\ge100\), this implies \(K\ge20s_j^2\). Using \(K!\ge (K/e)^K\), we get
\[
\frac{(s_j-1)^{2K}}{K!}
\le
\frac{s_j^{2K}}{K!}
\le
\left(\frac{es_j^2}{K}\right)^K
\le
\left(\frac e{20}\right)^K
\le
e^{-K}.
\]
Applying \cref{lem:one-block-discrete} with \(T=T_C\), we conclude that
\[
\P(\text{\(B_j\) is unfinished after \(T_C\) samples})
\le
e^{-K/4}+e^{-K}
\le
2e^{-K/4}. \qedhere
\]
\end{proof}

The final ingredient is that a partially completed block is no harder to finish than an empty one.

\begin{lemma}
\label{lem:fresh-phase}
Fix a block \(B_j\), and condition on any history of the process up to some time. For every \(T\ge0\), the conditional probability that \(B_j\) is unfinished after the next \(T\) samples is at most the probability that \(B_j\), started from the empty state, is unfinished after \(T\) samples.
\end{lemma}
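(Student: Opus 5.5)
Fix the history up to the current time, and let $a_1<\cdots<a_h$ be the occupied entries of $B_j$, where $0\le h\le s_j$. The future evolution of $B_j$ depends only on this configuration and on the subsequence $y_1,y_2,\dots$ of future samples that land in $I_j$. The future samples are independent of the past. Hence the number $M$ of those samples among the next $T$, and the values $y_1,\dots,y_M$ themselves, have the same joint law as for a block started from the empty state. It therefore suffices to prove a deterministic monotonicity statement. For every fixed finite sequence $y_1,\dots,y_M$ of distinct values in $I_j$, also distinct from $a_1,\dots,a_h$, the number of occupied slots after processing $y_1,\dots,y_M$ from $(a_1,\dots,a_h)$ is at least the number obtained from the empty start. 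By \cref{lem:truncated-patience}, the empty start yields $\min\{s_j,\LIS(y_1,\dots,y_M)\}$ occupied slots. Taking probabilities then gives the claim.

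To prove the deterministic statement, I would compare states coordinatewise. Write each state as a vector $(a_1,\dots,a_{s_j})$, with $a_\ell=+\infty$ for empty slots. Say that $a\preceq a'$ if $a_\ell\le a'_\ell$ for every $\ell$. The empty state is the maximal element, all coordinates being $+\infty$. The key step is to show that the local update $U_x$ (replace the leftmost entry exceeding $x$ by $x$, discarding $x$ if no slot qualifies) is monotone: $a\preceq a'$ implies $U_x(a)\preceq U_x(a')$. The number of occupied slots is the number of finite coordinates, and this count is antitone in $\preceq$. Iterating monotonicity from $a\preceq(+\infty,\dots,+\infty)$ therefore gives the claim.

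For monotonicity of $U_x$, a clean way is to express the update coordinatewise: $U_x(a)_\ell=\min\{a_\ell,\ x\text{ if }a_{\ell-1}<x\}$, with $a_0=-\infty$. This holds because $a$ is increasing, so $x$ enters exactly at the first $\ell$ with $a_\ell>x$, and that is the unique $\ell$ with $a_{\ell-1}<x<a_\ell$. The condition $a_{\ell-1}<x$ is preserved when $a$ decreases, and $a_\ell$ itself decreases, so each coordinate is monotone in $a$. The only care needed is with ties and $+\infty$ entries, which the distinctness and probability-one convention handle. It also remains to check that occupancy never exceeds $s_j$, which holds because a sample that would create a slot $s_j+1$ is discarded.

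I expect the main obstacle to be stating the conditioning step correctly rather than the combinatorics: the comparison must be made pathwise for the same future samples, under the convention that the history is fixed and the future samples are fresh. An alternative route uses the characterization from \cref{lem:truncated-patience} directly, treating the occupied values $a_1<\dots<a_h$ as an increasing prefix prepended to $y_1,\dots,y_M$. But the truncated rule started from an arbitrary state is not literally the rule applied to a sequence, so I would stick with the coordinatewise monotonicity argument.
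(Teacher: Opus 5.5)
Your argument is correct, but it takes a different route from the paper.

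The paper argues by reachability. The current state of $B_j$ was produced from the empty state by the earlier samples $x_1,\dots,x_u$ in $I_j$. So continuing with $y_1,\dots,y_t$ is the same as running the rule from empty on $x_1,\dots,x_u,y_1,\dots,y_t$, and \cref{lem:truncated-patience} reduces everything to $\LIS(x_1,\dots,x_u,y_1,\dots,y_t)\ge \LIS(y_1,\dots,y_t)$.

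You instead show that the one-sample update $U_x$ is order-preserving for the coordinatewise order, with empty slots set to $+\infty$. You use the formula $U_x(a)_\ell=\min\{a_\ell,x\}$ if $a_{\ell-1}<x$ and $U_x(a)_\ell=a_\ell$ otherwise, and iterate from $a\preceq(+\infty,\dots,+\infty)$. Your conditioning step (same law for the future subsequence in $I_j$, then a pathwise inclusion of events) is exactly what is needed.

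The paper's proof is shorter because it recycles \cref{lem:truncated-patience}. Yours does not need that lemma at all, so your appeal to it is superfluous. In exchange it yields more: a pathwise coordinatewise domination of the entire configuration between any two comparable states, not just an occupancy comparison with the empty start.

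One correction: the alternative you dismiss in your last paragraph does work. Feeding $a_1<\dots<a_h$ with $h\le s_j$ to an empty block appends each value in turn and yields exactly the state $(a_1,\dots,a_h)$. So running the rule from any sorted state is literally running it from empty on a prepended sequence. That is essentially the paper's proof, with the sorted occupied values in place of the actual past samples.
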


\begin{proof}
If \(B_j\) is already complete, there is nothing to prove. Otherwise, fix its current state. Since this state is reachable, there is a sequence \(x_1,\dots,x_u\) of earlier samples falling in \(I_j\) which produces it when the block is started from the empty state.

Now fix the future samples falling in \(I_j\) during the next \(T\) global samples, and call them \(y_1,\dots,y_t\). Starting from the current state and processing \(y_1,\dots,y_t\) is the same as starting from the empty state and processing
\[
x_1,\dots,x_u,y_1,\dots,y_t.
\]
By \cref{lem:truncated-patience}, the two resulting occupancies are
\[
\min\{s_j,\LIS(x_1,\dots,x_u,y_1,\dots,y_t)\}
\qquad\text{and}\qquad
\min\{s_j,\LIS(y_1,\dots,y_t)\},
\]
respectively. The first quantity is at least the second, since every increasing subsequence of \(y_1,\dots,y_t\) is also an increasing subsequence of the concatenated sequence.

Thus, for every fixed realization of the future samples, the block can be unfinished from the current state only if it would also be unfinished from the empty state. Taking conditional probabilities proves the lemma.
\end{proof}

We now have all the ingredients needed to prove the with-replacement upper bound. 

\begin{proof}[Proof of \cref{thm:replacement-upper}]
Set
\[
r:=\max\left\{2,\left\lceil \sqrt{\log n}\right\rceil\right\},
\]
and run the with-replacement block strategy with this value of \(r\). Let \(\tau\) be its completion time.

We first prove the expectation bound. Take \(C=100\), and define
\[
K:=\lceil Cr^2\rceil,
\qquad
T_C:=\left\lceil \frac{2Kn}{r}\right\rceil.
\]
Let
\[
\delta_C:=2m e^{-K/4}.
\]
By \cref{lem:one-phase-failure-discrete} and the union bound, if all blocks start empty, then the probability that at least one block is unfinished after \(T_C\) samples is at most \(\delta_C\). Since \(m\le n/r\), \(K\ge100r^2\), and \(r^2\ge\log n\), we have
\[
\delta_C
\le
\frac{2n}{r}e^{-25r^2}
\le
\frac{2}{r}n^{-24}
<
\frac1{10}
\]
for all sufficiently large \(n\).

We now iterate phases of length \(T_C\). Let
\[
A_\ell:=\{\tau>\ell T_C\}.
\]
Condition on the history up to time \((\ell-1)T_C\). On \(A_{\ell-1}\), at least one block is unfinished. During the next \(T_C\) samples, the future samples are independent of the past. By \cref{lem:fresh-phase}, for each unfinished block the conditional probability of remaining unfinished at the end of this phase is at most the empty-start probability from \cref{lem:one-phase-failure-discrete}. A union bound over all blocks gives
\[
\P(A_\ell\mid \mc F_{(\ell-1)T_C})
\le
\delta_C
\qquad\text{on }A_{\ell-1}.
\]
Therefore
\[
\P(A_\ell)
=
\mb{E}\left[
\mbf1_{A_{\ell-1}}
\P(A_\ell\mid \mc F_{(\ell-1)T_C})
\right]
\le
\delta_C\P(A_{\ell-1})
\le
\delta_C^\ell.
\]
Using the tail-sum formula,
\[
\mb{E}[\tau]
=
\sum_{t\ge0}\P(\tau>t)
\le
\sum_{\ell\ge0}\sum_{t=\ell T_C}^{(\ell+1)T_C-1}\P(\tau>\ell T_C)
\le
T_C\sum_{\ell\ge0}\delta_C^\ell
=
\frac{T_C}{1-\delta_C}.
\]
Since \(\delta_C<1/10\) and \(K=\lceil100r^2\rceil\), this gives
\[
\mb{E}[\tau]\le \frac{10}{9}T_C=O(nr)=O(n\sqrt{\log n}).
\]

It remains to prove the high-probability statement. Fix \(D>0\). Choose \(C=C(D)\ge100\) large enough that
\[
\frac C4>D+2.
\]
Set
\[
K_D:=\lceil Cr^2\rceil,
\qquad
T_D:=\left\lceil \frac{2K_Dn}{r}\right\rceil.
\]
The strategy is unchanged; only the time at which we inspect it has changed. Since all blocks start empty at time \(0\), \cref{lem:one-phase-failure-discrete} and the union bound give
\[
\P(\tau>T_D)
\le
2m e^{-K_D/4}
\le
\frac{2n}{r}e^{-Cr^2/4}.
\]
As \(r^2\ge\log n\),
\[
\frac{2n}{r}e^{-Cr^2/4}
\le
\frac2r n^{1-C/4}
\le
n^{-D}
\]
for all sufficiently large \(n\). Finally,
\[
T_D=O_D(nr)=O_D(n\sqrt{\log n}).
\]
Thus, after increasing the constant and denoting it by \(C_D\),
\[
\P\left(\tau>C_D n\sqrt{\log n}\right)\le n^{-D}. \qedhere
\]
\end{proof}

\bibliographystyle{amsplain}
\bibliography{main}

@article{SamuelsSteele81,
  author = {Samuels, Stephen M. and Steele, J. Michael},
  title = {Optimal Sequential Selection of a Monotone Sequence from a Random Sample},
  journal = {The Annals of Probability},
  year = {1981},
  volume = {9},
  number = {6},
  pages = {937--947}
}

@article{Gnedin99,
  title={Sequential selection of an increasing subsequence from a sample of random size},
  author={Gnedin, Alexander V},
  journal={Journal of applied probability},
  volume={36},
  number={4},
  pages={1074--1085},
  year={1999},
  publisher={Cambridge University Press}
}

@article{BrussDelbaen01,
  author = {Bruss, F. Thomas and Delbaen, Freddy},
  title = {Optimal Rules for the Sequential Selection of Monotone Subsequences of Maximum Expected Length},
  journal = {Stochastic Processes and their Applications},
  year = {2001},
  volume = {96},
  number = {2},
  pages = {313--342}
}

@article{BrussDelbaen04,
  author = {Bruss, F. Thomas and Delbaen, Freddy},
  title = {A Central Limit Theorem for the Optimal Selection Process for Monotone Subsequences of Maximum Expected Length},
  journal = {Stochastic Processes and their Applications},
  year = {2004},
  volume = {114},
  number = {2},
  pages = {287--311}
}

@article{ArlottoNguyenSteele15,
  title={Optimal online selection of a monotone subsequence: a central limit theorem},
  author={Arlotto, Alessandro and Nguyen, Vinh V and Steele, J Michael},
  journal={Stochastic Processes and their Applications},
  volume={125},
  number={9},
  pages={3596--3622},
  year={2015},
  publisher={Elsevier}
}

@article{ArlottoMosselSteele16,
  title={Quickest online selection of an increasing subsequence of specified size},
  author={Arlotto, Alessandro and Mossel, Elchanan and Steele, J Michael},
  journal={Random Structures \& Algorithms},
  volume={49},
  number={2},
  pages={235--252},
  year={2016},
  publisher={Wiley Online Library}
}

@inproceedings{AamandAbrahamsenBerettaKleist23,
  author    = {Aamand, Anders and Abrahamsen, Mikkel and Beretta, Lorenzo and Kleist, Linda},
  title     = {Online Sorting and Translational Packing of Convex Polygons},
  booktitle = {Proceedings of the 2023 Annual ACM-SIAM Symposium on Discrete Algorithms (SODA)},
  pages     = {1806--1833},
  publisher = {SIAM},
  year      = {2023},
  doi       = {10.1137/1.9781611977554.ch69}
}

@article{AldousDiaconis99,
  author  = {Aldous, David and Diaconis, Persi},
  title   = {Longest increasing subsequences: from patience sorting to the {B}aik--{D}eift--{J}ohansson theorem},
  journal = {Bulletin of the American Mathematical Society},
  volume  = {36},
  number  = {4},
  pages   = {413--432},
  year    = {1999},
  doi     = {10.1090/S0273-0979-99-00796-X}
}

@inproceedings{AbrahamsenBerceaBerettaKlausenKozma24,
  author    = {Abrahamsen, Mikkel and Bercea, Ioana O. and Beretta, Lorenzo and Klausen, Jonas and Kozma, L{\'a}szl{\'o}},
  title     = {Online Sorting and Online {TSP}: Randomized, Stochastic, and High-Dimensional},
  booktitle = {32nd Annual European Symposium on Algorithms (ESA 2024)},
  series    = {Leibniz International Proceedings in Informatics (LIPIcs)},
  volume    = {308},
  pages     = {5:1--5:15},
  publisher = {Schloss Dagstuhl -- Leibniz-Zentrum f{\"u}r Informatik},
  year      = {2024},
  doi       = {10.4230/LIPIcs.ESA.2024.5}
}

@inproceedings{Hu26,
  author    = {Yang Hu},
  title     = {Nearly Optimal Bounds for Stochastic Online Sorting},
  booktitle = {Proceedings of the 2026 Annual ACM-SIAM Symposium on Discrete Algorithms (SODA)},
  pages      = {4969--4995},
  year       = {2026},
  publisher  = {Society for Industrial and Applied Mathematics (SIAM)},
  isbn       = {978-1-61197-897-1},
  doi        = {10.1137/1.9781611978971.180},
  url        = {https://doi.org/10.1137/1.9781611978971.180}
}

\end{document}